\documentclass[a4paper,12pt]{article}
\usepackage[utf8]{inputenc}
\usepackage{amsmath}  
\usepackage{amssymb}       
\usepackage{amsfonts}
\usepackage{dsfont}
\usepackage{times}
\usepackage{amsthm} 
\usepackage{epsfig}
\usepackage{paralist}         
\usepackage{color}
\usepackage{cancel}               
\newcommand{\DATUM}{31.05.2026}       
\newcommand{\ol}{\overline}

\newcommand{\eps}{{\varepsilon}}       
          
\newcommand{\Om}{\Omega}               

\newcommand{\la}{\langle}
\newcommand{\ra}{\rangle}

\newcommand{\dGamma}{{\mathrm{d}\Gamma}}
\newcommand{\Bog}{{\mathrm{Bog}}}

\newcommand{\bfone}{\mathbf{1}}
\newcommand{\cB}{\mathcal{B}}
\newcommand{\cD}{\mathcal{D}}
\newcommand{\cE}{\mathcal{E}}

\newcommand{\cI}{\mathcal{I}}

\newcommand{\cL}{\mathcal{L}}         
         
\newcommand{\cN}{\mathcal{N}}         
\newcommand{\cO}{\mathcal{O}}         
         
\newcommand{\cQ}{\mathcal{Q}}

\newcommand{\cS}{\mathcal{S}}
\newcommand{\cT}{\mathcal{T}}
\newcommand{\cU}{\mathcal{U}}

\newcommand{\bbA}{{\mathds{A}}} 
 
\newcommand{\bbC}{{\mathds{C}}}

\newcommand{\bbN}{{\mathds{N}}}     
\newcommand{\bbP}{{\mathds{P}}}   
 
\newcommand{\bbR}{{\mathds{R}}}     
\newcommand{\bbZ}{{\mathds{Z}}}     

\newcommand{\bbU}{{\mathds{U}}} 
   
\newcommand{\bbW}{{\mathds{W}}}     
\newcommand{\bbone}{{\mathds{1}}}
\newcommand{\fF}{\mathfrak{F}}

\newcommand{\fh}{\mathfrak{h}}

\newcommand{\sfJ}{{\mathsf J}}

\newcommand{\hA}{\widehat{A}}

\newcommand{\hcE}{\widehat{\mathcal{E}}}

\newcommand{\hxi}{\widehat{\xi}}
\newcommand{\tcE}{\widetilde{\cE}}

\newcommand{\tH}{\widetilde{H}}

\newcommand{\vG}{{\vec{G}}}

\newcommand{\veps}{{\vec{\varepsilon}}}

\newcommand{\vf}{{\vec{f}}}

\newcommand{\vk}{{\vec{k}}}

\newcommand{\vp}{{\vec{p}}}

\newcommand{\vx}{{\vec{x}}}
\newcommand{\vO}{{\vec{0}}}
\newcommand{\DM}{\mathfrak{DM}}

\newcommand{\rRe}{\mathrm{Re}}

\newcommand{\cirS}{\mathop{\bigcirc\kern -.73em {\scriptstyle{\rm S}}}}

\newcommand{\Tr}{{\rm Tr}}
 
\newcommand{\op}{\mathrm{op}} 
\newcommand{\gs}{{\rm gs}}

\newcommand{\el}{{\mathrm{el}}}
\newcommand{\ph}{{\mathrm{ph}}}
\newcommand{\BHF}{{\small{\mathrm{BHF}}}}

\newcommand{\rmmin}{{\mathrm{min}}}
\newcommand{\LL}{{\mathrm{LL}}}

\newcommand{\LBOUND}{E_\mathrm{low}}
\newcommand{\UBOUND}{E_\mathrm{up}}

\usepackage{enumitem}
\numberwithin{equation}{section}

\newtheorem{thm}{Theorem} [section]  
\newtheorem{lem}[thm]{Lemma}

\theoremstyle{plain}
\begin{document}
\bibliographystyle{plain}

\title{Bounds on the Bogoliubov--Hartree--Fock Energy of the Pauli--Fierz
  Hamiltonian}

\author{
Volker Bach \\
\small{IAA, TU Braunschweig, Germany (v.bach@tu-bs.de)}
\and
Matthias Herdzik \\
\small{IAA, TU Braunschweig, Germany (m.herdzik@tu-bs.de)}
}
\date{\DATUM}

\maketitle

\vspace*{-12mm}

\begin{center}
Dedicated to Israel Michael Sigal, \\
with admiration for his creativity.
\end{center}

\textbf{Abstract:} A variational analysis of the
Bogoliubov--Hartree--Fock (BHF) energy of the translation-invariant,
spinless Pauli--Fierz Hamiltonian with massless dispersion relation
built up on \cite{BachBreteauxTzaneteas2013} and \cite{BachHach2022}
is presented. The main results are lower and upper bounds on the BHF
energy for fixed total momentum expressed through simpler variational
problems defined on the space of positive Hilbert--Schmidt operators
and a new variational formulation of the upper bound for zero total
momentum. Specifically, we introduce a change of variables which
considerably simplifies the energy functional and the derivation of
its stationarity condition.

\section{Introduction} \label{sec-I}
%
In recent decades the ultraviolett problem of a single
non-relativistic, spinless particle coupled to the radiation field has
been analysed using different methods. Specifically relevant for the
present work are \cite{LiebLoss2005, BachBreteauxTzaneteas2013,
  BachHach2022}, from which we review some parts after introducing the
model.

\vspace{1em}

Let $\tilde{\fh} := 
\big\{ \vf \in L^2(S_{\sigma,\Lambda}; \bbC \otimes \bbR^3)
\, \big| \: \forall \vk \in S_{\sigma,\Lambda} \; a.e.: \ 
\vk\cdot \vf(\vk) = 0 \big\}$ 
be the Hilbert space of square-integrable, transverse vector fields
defined on the \textit{momentum range}
\begin{align} \label{eq-I.01}
S_{\sigma,\Lambda} \ := \ 
\big\{ \vk \in \bbR^3 \; : \ \sigma \leq |\vk| \leq \Lambda \big\} \, ,
\end{align} 
where $0 < \sigma < \Lambda < \infty$ are the infrared and
ultraviolett cutoffs. For $\vk \in S_{\sigma,\Lambda}$, the vectors
$\veps_+(\vk)$ and $\veps_-(\vk)$ are chosen in a way such that
$\{\veps_+(\vk),\veps_-(\vk), \vk/|\vk| \} \subseteq \bbR^3$ forms
a real orthonormal basis and such that $\vk \mapsto \veps_+(\vk)$ and
$\vk \mapsto \veps_-(\vk)$ are measurable. 
This allows us to identify $\tilde{\fh}$ with the 
\textit{one-photon Hilbert space} 
\begin{align} \label{eq-I.03}
\fh \ := \ L^2(S_{\sigma,\Lambda}\times \bbZ_2) 
\end{align}
by virtue of the unitary map 
\begin{align} \label{eq-I.04}
\fh \, \ni \, f(\vk,\tau) 
\ \mapsto \ 
\veps_+(\vk) \, f(\vk,+) \, + \, \veps_-(\vk) \, f(\vk,-) 
\, \in \, \tilde{\fh} \, .
\end{align}
Additionally, we assume 
that, for any $\vk \in S_{\sigma,\Lambda}$ 
\begin{align} \label{eq-I.02}
\veps_\pm (-\vk) \ = \ -\veps_\pm (\vk) \, .
\end{align}
The \textit{Pauli--Fierz Hamiltonian} is the selfadjoint operator 
\begin{align} \label{eq-I.05}
\tH_g \ := \ 
\frac{1}{2} \big( \tfrac{1}{i}\vec{\nabla} 
+ \vec{\bbA}(\vec{x})\big)^2 \: + \: H_\ph \, ,
\end{align}
which is defined on 
$H^1(\bbR^3) \otimes \cD(\cN^{1/2}) \subseteq 
L^2(\bbR^3) \otimes \fF_\ph$ as a quadratic form,
$\fF_\ph := \fF_b(\fh)$ being the boson Fock space over $\fh$, 
where the \textit{photon field energy} is defined as
\begin{align} \label{eq-I.06}
H_\ph \ := \ \dGamma(|k|)
\end{align}
and the \textit{magnetic vector potential} is given by
\begin{align} \label{eq-I.07}
\vec{\bbA}(\vx)
\ := \ 
a^*\big(e^{-i \vk \cdot \vx} \vG \big) \, + \, a\big(e^{-i \vk \cdot \vx} \vG \big) \, ,
\end{align}
where 
\begin{align} \label{eq-I.08}
\vG(\vk, \tau) \ := \  
\veps_\tau(\vk) \; \frac{g }{|\vk|^{1/2}} \, ,
\end{align} 
and $g \geq 0$ is the coupling constant. The \textit{ground state
  energy} is the infimum
\begin{align} \label{eq-I.09}
E_\gs \ := \ 
\inf\Big\{ \big\la \Psi \big| \tH_g \Psi \big\ra \: \Big|
\ \Psi \in H^1(\bbR^3) \otimes \cD(\cN^{1/2}), \ \|\Psi\|=1 \Big\} \, ,
\end{align}
which, by the Rayleigh--Ritz principle is equal to the bottom of the
spectrum $\inf\sigma(\tH_g)$. 

In \cite{LiebLoss2005}, Lieb and Loss derived the following lower and
upper bounds
\begin{align} \label{eq-I.10}
C_1 \, \alpha^{2/7} \, \Lambda^{3/2} \ \leq \ 
E_\gs \ \leq \ C_2 \, \alpha^{2/7} \, \Lambda^{12/7} \, ,
\end{align}
on the ultraviolett behavior, $\Lambda \gg 1$, of the ground state
energy. Here, $0 < C_1, C_2 <\infty$ are constants and 
$\alpha := g^2 \geq 0$ is the fine structure constant. They
furthermore conjectured the ultraviolett behaviour 
$E_\gs \sim \alpha^{2/7} \Lambda^{12/7}$, as $\Lambda \to \infty$. The
significance of this result lies in the discrepancy to the prediction
$E_\gs \sim \alpha \Lambda^2$ one obtains from naive perturbation
theory about the vacuum vector, thus stressing the importance of
non-perturbative methods. The upper bound in \eqref{eq-I.10} was
obtained through the analysis of the related variational problem
\begin{align} \label{eq-I.11}
& E_\LL \ := \ 
\\ \nonumber 
& \inf\Big\{ \la \Psi | \tH_g \Psi \ra  \: \Big| \ 
\Psi = \varphi_\el \otimes \psi_\ph \in H^2(\bbR^3) \otimes \cD(\cN), 
\ \|\varphi_\el\| = \|\psi_\ph\| = 1 \Big\} \, ,
\end{align}
which we call the \textit{Lieb--Loss Energy}. Note that 
the Lieb--Loss energy is an upper bound on the ground state
energy, although it is a priori unclear how large the deviation is. 

In \cite{BachHach2022}, Bach and Hach extended the methods from
\cite{LiebLoss2005} and proved the conjecture of Lieb and Loss with a
quantitative error bound,
\begin{align} \label{eq-I.12}
-C \, \alpha^{4/49} \, \Lambda^{-4/49} \ \leq \ 
\frac{E_{\LL}}{F \, \alpha^{2/7} \, \Lambda^{12/7}} \, - \, 1 
\ \leq \ 
C \, \alpha^{4/105} \, \Lambda^{-4/105},
\end{align}
where $C>0$ is a universal constant and $F > 0$ is the zero
of a Bessel function.

A different route was taken by Bach, Breteaux, and Tzaneteas in
\cite{BachBreteauxTzaneteas2013}. To begin with, the
translation-invariance allows to remove the particle degree
of freedom by conjugating the Pauli--Fierz Hamiltonian by
a suitable unitary transformation $\bbU$ which yields the direct
integral decomposition 
$\bbU \tH_g \bbU^*=\int^\oplus H_{g,\vec{p}} \, \mathrm{d}^3p$, where
\begin{align} \label{eq-I.13}
H_{g,\vp} \ := \ 
\frac{1}{2}\big( \vec{\bbP}_\ph + \vec{\bbA}(\vO) - \vp \big)^2 
\: + \: H_\ph
\end{align}
is the \textit{fiber Hamiltonian of total momentum $\vp$}, and 
\begin{align} \label{eq-I.14}
\vec{\bbP}_\ph \ := \ \dGamma(\vk)
\end{align}
is the \textit{momentum of the photon field}. The operators
in both \eqref{eq-I.13} and \eqref{eq-I.14} are defined on 
$\cD(\cN)\subseteq \fF_\ph$. Observing that
\begin{align} \label{eq-I.17}
E_\gs
\ = \ 
\inf_{\vp \in \bbR^3} E_\gs(\vp) \, ,
\end{align}
where
\begin{align} \label{eq-I.15}
E_\gs(\vp) 
\ := \ 
\inf \sigma (H_{g,\vp}) 
\ = \ 
\inf\big\{ \Tr[\rho H_{g,\vp}] \; \big| 
\ \rho \in \DM \big\} \, ,
\end{align}
with
\begin{align}  \label{eq-I.16}
\DM
\ := \ 
\Big\{ \rho \in \cL^1(\fF_\ph) \: \Big| 
\ \rho\geq 0, \ \Tr[\rho]=1, \ \rho H_{g,\vp}, H_{g,\vp}\: \rho 
\in \cL^1(\fF_\ph) \Big\}
\end{align}
is the convex set of density matrices of finite energy, we are lead to
the \textit{Bogoliubov--Hartree--Fock (BHF) Approximation} of
$E_\gs(\vp)$ given for fixed $\vp \in \bbR^3$ by the \textit{BHF
  energy}
\begin{align} \label{eq-I.18}
E_\BHF(\vp) \ := \
\inf\big\{ \Tr[\rho H_{g,\vp}] \; \big| 
\ \rho \in \DM, \ \rho \text{ is quasifree} \big\} \, .
\end{align}
Quasifree states are those states, which are fully characterized by
their two-point functions $\Tr[\rho\: a^*(f)a(g)]$ and for
Hamiltonians, which are quadratic in the fields, the BHF approximation
is exact, see \cite{BachLiebSolovej1994}. The Pauli--Fierz Hamiltonian,
however, has quartic parts, therefore, $E_\BHF(\vp)$ is at least a
priori a true approximation of $E_\gs(\vp)$. For the exact definition
of quasifree density matrices, we refer to the second section of
\cite{BachBreteauxTzaneteas2013}. The authors also showed
that
\begin{align} \label{eq-I.19}   
E_\BHF(\vp) \ = \ 
\inf\Big\{ \Tr[\rho H_{g,\vp}] \: \Big| \ 
\rho \in \DM, \ \rho \text{ is quasifree and pure} \Big\} \, ,
\end{align} 
and that all pure quasifree density matrices can be expressed through
Bogoliubov transformations, see Sect.~\ref{sec-II}, which is the
starting point of our analysis and which has also been shown in
\cite{DerezinskiNapiorkowskiSolovej2013}.

The ultimate goal is to determine the optimal
Bogoliubov transformation and to conjugate the fiber Hamiltonian by
it, which we hope will unveil the nature of the ultraviolett
singularity of the fiber and full Hamiltonian.

Our new results are variational upper and lower bounds on the BHF
energy, which we consider an important step towards the notoriously
difficult analysis of the ultraviolet limit in the instance of the
Pauli--Fierz model.
%
\begin{thm} \label{thm-I.1}
Let $\sfJ: \fh\to \fh$ be the antiunitary map $\psi(k) \mapsto
\overline{\psi(-k)}$. Then for any $\vp \in \bbR^3$
\begin{align} \label{eq-I.20}   
\LBOUND(\vp) \ \leq \ E_{\BHF}(\vp) \ \leq \ \UBOUND(\vp)
\end{align} 
where
\begin{align} \label{eq-I.21}   
\LBOUND(\vp) \ = \ &
\inf\Big\{ \cE_{g,\vp}(V, \eta) \; \Big| \ 
V \in \cL_{s.a.}^2(\fh), \ V \geq 0, \ \eta \in \fh \Big\} \, ,
\\[1ex] \label{eq-I.22}   
\UBOUND(\vp) \ = \ &
\inf\Big\{ \cE_{g,\vp}(V, \eta) \; \Big| \ 
V = \sfJ V\sfJ  \in \cL_{s.a.}^2(\fh), \ V \geq 0, \ 
\eta = \sfJ \eta \in \fh \Big\} \, ,
\end{align} 
with
\begin{align} \label{eq-I.23}   
\cE_{g,\vp}(V,\eta) 
\ = \ 
\frac{1}{2} \sum\nolimits_{\nu=1}^3 & \Big\{ 
\Big(\Tr[ k_\nu V^2 ] + 
\la \eta | k_\nu \eta \ra +2 \mathrm{Re}\la \eta | G_\nu \ra -p_\nu \Big)^2 
\nonumber \\ & 
- \Tr[(k_\nu V\sqrt{1+V^2} )^2] + \Tr [k_\nu V^2 k_\nu (1+V^2)]
\nonumber \\ & 
+ \la G_\nu +k_\nu \eta | (\sqrt{1+V^2}-V)^2 (G_\nu +k_\nu \eta) \ra \Big\} 
\nonumber \\ & 
+\Tr[|k|V^2] + \big\la \eta \big| |k| \eta \big\ra 
\end{align} 
and 
$\cL_{s.a.}^2(\fh) := \{ A \in \cL^2(\fh) | A= A^* \} \subseteq \cL^2(\fh)$
being the real subspace of self-adjoint Hilbert--Schmidt operators.
\end{thm}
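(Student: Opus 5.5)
Starting from \eqref{eq-I.19}, every pure quasifree state is of the form $\rho = |\bbW_\eta \bbU_{U,V}\Omega\ra\la \bbW_\eta\bbU_{U,V}\Omega|$. Here $\bbW_\eta$ is a Weyl operator and $\bbU_{U,V}$ a Bogoliubov transformation. The latter implements
$a(f)\mapsto a(Uf)+a^*(\ol{V}f)$, with $U^*U - V^*V=1$, $U^*\ol V = V^*\ol U$, and $V\in\cL^2(\fh)$ (Shale--Stinespring). The plan is to compute $\Tr[\rho H_{g,\vp}]$ explicitly as a function of $(U,V,\eta)$ and then reduce the parameters.

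\textbf{Step 1: the energy functional.} Expand $H_{g,\vp}$ in normal-ordered monomials of $a,a^*$. The operator $\frac12(\vec\bbP_\ph+\vec\bbA(\vO)-\vp)^2$ contains quartic terms $\frac12\dGamma(k_\nu)^2$, cubic terms $\dGamma(k_\nu)\bbA_\nu$, and quadratic and linear terms. Conjugate by $\bbW_\eta$, which shifts $a\mapsto a+\eta$, and by $\bbU_{U,V}$. Then take the vacuum expectation, evaluating the quartic part by Wick's theorem. The result should be a sum over $\nu$ of three pieces:
\begin{itemize}
\item a square $(\la\dGamma(k_\nu)\ra + \la \eta|k_\nu\eta\ra+2\rRe\la\eta|G_\nu\ra-p_\nu)^2$, with $\la\dGamma(k_\nu)\ra=\Tr[k_\nu V^*V]$;
\item fluctuation terms coming from the two Wick contractions of $\dGamma(k_\nu)^2$, namely $\Tr[k_\nu V^*Vk_\nu U^*U]$ plus a pairing term $-|\Tr$-type$|$ involving $\ol V^* k_\nu U$;
\item a term $\|(U^*-\ldots)(G_\nu+k_\nu\eta)\|^2$ from the linear field part.
\end{itemize}
To these one adds the contribution $\Tr[|k|V^*V]+\la\eta||k|\eta\ra$ from $H_\ph$.

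\textbf{Step 2: polar decomposition and change of variables.} Write $V^*V=:W^2$ with $W\ge0$ Hilbert--Schmidt, so that $U^*U=1+W^2$. Using the polar decomposition together with the symplectic relations, one can write $U = \cU\sqrt{1+W^2}$ and $\ol V=\cU' W$ with partial isometries that satisfy a compatibility constraint. The combination entering the linear term then becomes $(\sqrt{1+W^2}-W)$, after choosing the phases optimally. The pairing term becomes $-\Tr[(k_\nu W\sqrt{1+W^2})^2]$, possibly up to a unitary factor that can only make it larger in absolute value or equal.

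\textbf{Step 3: the bounds.} For the lower bound, show that for arbitrary admissible $(U,V,\eta)$ the energy is at least $\cE_{g,\vp}(W,\eta)$ with $W=|V|$. The pairing term $\rRe\Tr[k_\nu \ol V^* k_\nu U\cdots]$ is bounded via Cauchy--Schwarz by $\Tr[(k_\nu W\sqrt{1+W^2})^2]$. The linear term $\|(U^*\mp V^T)(G+k\eta)\|^2$ is bounded below by $\la\cdot|(\sqrt{1+W^2}-W)^2\cdot\ra$, using $\|(A-B)x\|\ge$ the corresponding expression with the positive parts. Taking the infimum over $W\ge0$ and $\eta$ gives $\LBOUND\le E_\BHF$.

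For the upper bound, construct for each $V=\sfJ V\sfJ\ge0$ and $\eta=\sfJ\eta$ an explicit admissible pair, $U=\sqrt{1+V^2}$ and $\ol V = V\sfJ$-type, so that the symplectic constraints hold. The $\sfJ$-invariance is exactly what makes $U^*\ol V$ symmetric, and \eqref{eq-I.02} ensures that $\sfJ$ commutes with the $G_\nu$ and $k_\nu$ appropriately, since $\sfJ k_\nu \sfJ=-k_\nu$ and $\sfJ G_\nu = G_\nu$. For this pair the inequalities in Step 3 are equalities, so $\cE_{g,\vp}(V,\eta)$ is attained by a quasifree state and $E_\BHF\le\UBOUND$.

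\textbf{Main obstacle.} The hardest step is the bookkeeping in Step 1 together with the sign and phase analysis in Step 3. In particular, one must verify that the cross term between $k_\nu$ and the antilinear pairing, involving $\sfJ$, produces exactly $-\Tr[(k_\nu V\sqrt{1+V^2})^2]$ with equality under $\sfJ$-symmetry and only an inequality in general. I would first do the computation for finite-rank $V$ and then extend by Hilbert--Schmidt approximation, using the finite-energy condition in $\DM$ for domain issues.
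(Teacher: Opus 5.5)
Your skeleton matches the paper's. You compute the functional of Lemma~\ref{lem-II.1}, bound the two $U$-dependent terms below by their values at $W=|V|$, $U=\sqrt{1+W^2}$ (Lemmas~\ref{lem-III.1} and~\ref{lem-III.2}), and get the upper bound from the equality case for $\sfJ$-symmetric data. However, these two inequalities are the whole content of the proof, and the reasons you give for them do not work.

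\textbf{The two inequalities.} The natural unweighted Cauchy--Schwarz estimate is $|\Tr[(k_\nu V^*\sfJ U)^2]|=|\Tr[(\sfJ Y\sfJ)\,Y]|\le\|Y\|_2^2$ with $Y=Uk_\nu V^*$. It gives only $\Tr[k_\nu W^2k_\nu(1+W^2)]$, i.e.\ merely $\cT(U,V)\ge0$. By AM--GM in an eigenbasis of $W$, this generically strictly exceeds the required $\Tr[(k_\nu W\sqrt{1+W^2})^2]$, so \eqref{eq-III.17} does not follow. Likewise, $\|(A-B)x\|\ge\|(|A|-|B|)x\|$ is false in general: take $A=\bigl(\begin{smallmatrix}1&0\\0&0\end{smallmatrix}\bigr)$, $B=\bigl(\begin{smallmatrix}0&1\\0&0\end{smallmatrix}\bigr)$, $x=(1,1)$. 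Also, in Step~2 you need the phase factor to \emph{decrease} the modulus, not increase it.

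The missing ingredient is the intertwining relation $VV^*\sfJ U=\sfJ U\,V^*V$, which follows from \eqref{eq-II.08}--\eqref{eq-II.09}. The paper exploits it through a Cauchy--Schwarz inequality weighted by $D_\eps^{\pm1/4}$, followed by $\eps\to0$ and a finite-rank cut-off $P_N$.

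Your polar-decomposition idea does work once this relation is used, and it is then shorter than the paper's argument:
\begin{itemize}
\item Write $U=M_U\sqrt{1+W^2}$ with $M_U$ unitary (since $U^*U,\,UU^*\ge1$), and $V=M_VW$ with $M_V$ a partial isometry.
\item Multiplying the intertwining relation by $M_V^*$ on the left and $(1+W^2)^{-1/2}$ on the right shows that the antilinear contraction $T:=M_V^*\sfJ M_U$ commutes with $W^2$, hence with every real function of $W$.
\item Thus $V^*\sfJ U=TC$ with $C:=W\sqrt{1+W^2}$, and $\Tr[(k_\nu V^*\sfJ U)^2]=\Tr[K_\nu\,TK_\nu T]$ with $K_\nu:=C^{1/2}k_\nu C^{1/2}$. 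Its modulus is at most $\|K_\nu\|_2^2=\Tr[(k_\nu C)^2]$.
\item Also $|\la f|V^*\sfJ Uf\ra|=|\la C^{1/2}f|TC^{1/2}f\ra|\le\la f|Cf\ra$, which is exactly what \eqref{eq-III.18} needs.
\end{itemize}

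\textbf{The sign of $\sfJ G_\nu$.} Your claim $\sfJ G_\nu=G_\nu$ is wrong: \eqref{eq-I.02} and the reality of $\veps_\tau$ give $\sfJ G_\nu=-G_\nu$, and the equality case depends on this sign. For $V=\sfJ V\sfJ\ge0$ and $U=\sqrt{1+V^2}$, one has $V^*\sfJ U=C\sfJ$ with $C=V\sqrt{1+V^2}$ commuting with $\sfJ$.
\begin{itemize}
\item The pairing term equals $-\Tr[(k_\nu C)^2]$ because $\sfJ k_\nu\sfJ=-k_\nu$.
\item The cross term is $\la f_\nu|C\sfJ f_\nu\ra$ with $f_\nu=G_\nu+k_\nu\eta$. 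It reduces to $-\la f_\nu|Cf_\nu\ra$ when $\sfJ f_\nu=-f_\nu$.
\item Since $\sfJ\eta=\eta$ gives $\sfJ(k_\nu\eta)=-k_\nu\eta$, this holds precisely because $\sfJ G_\nu=-G_\nu$.
\end{itemize}
With your sign, $f_\nu$ would not be a $\sfJ$-eigenvector and the cross term would not collapse to $-\la f_\nu|Cf_\nu\ra$. Your construction would then not establish $E_\BHF(\vp)\le\UBOUND(\vp)$.
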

%
Theorem~\ref{thm-I.1} implies that $\UBOUND(\vp)$ is an upper bound on
$E_\gs(\vp) \leq E_\BHF(\vp) \leq \UBOUND(\vp)$ and, therefore,
  also on $E_\gs$. 
We then focus on the analysis of this upper bound
$\UBOUND(\vO)$ for zero total momentum. In this case, we
reparametrize the variable $V$ and eventually eliminate the
variable $\eta$ from the variation by completion of a square in
$\cE_{g,\vO}$ as
\begin{align} \label{eq-I.24,1}   
\cE(z) \ := \ & \cE_{g,\vO}\big( V_z , \eta_z \big) \, ,
\quad \text{with} \quad
V_z \ := \ \frac{z}{2\sqrt{1+z}} 
\quad \text{and} 
\\[1ex] \label{eq-I.24,2}   
\eta_z \ := \ & 
- \frac{1}{2} \bigg( |k| + \frac{1}{2} \sum_{\nu=1}^3 
k_\nu (1+z)^{-1} k_\nu \bigg)^{-1} 
\sum_{\nu=1}^3 k_\nu (1+z)^{-1} G_\nu \, .
\end{align} 
The stationarity condition for the resulting functional $\cE$ 
is the content of the following theorem.
%
\begin{thm} \label{thm-I.2}
We have that 
\begin{align} \label{eq-I.25}   
\UBOUND(\vO) \ = \ 
\inf\Big\{ \cE(z) \; \Big| 
\ z = \sfJ z\sfJ  \in \cL_{\geq 0}^2(\fh) \Big\} \, .
\end{align} 
Furthermore, if $z = \sfJ z\sfJ \in \cL_{\geq 0}^2(\fh)$ is a
minimizer of $\cE$, i.e., if $\cE(z) = \UBOUND(\vO)$, then
\begin{align} \label{eq-I.26}   
2 \sum_{\nu=1}^3 \Big\{ 
|G_\nu + k_\nu \eta_z \ra \la G_\nu + k_\nu \eta_z | \Big\} 
\ = \  &
(1+z) A_z (1+z) \, - \,  A_0 \,-\frac{1}{2}\sum_{\nu=1}^3 k_\nu z k_\nu ,
\\ \label{eq-I.27}   
\text{where $\eta_z$ is as in \eqref{eq-I.24,2} and } \ \ A_z \ := \ & 
|k| \, + \, \frac{1}{2}\sum_{\nu=1}^3 k_\nu (1+z)^{-1} k_\nu \, .
\end{align} 
\end{thm}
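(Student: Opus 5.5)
Theorem~\ref{thm-I.2} has two parts: the reformulation \eqref{eq-I.25} and the stationarity equation \eqref{eq-I.26}. I would prove the reformulation first, in two steps, and then derive the equation by a first-variation argument.

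\textbf{Step 1: change of variables.} For $V \geq 0$ put $z := 2V^2 + 2V\sqrt{1+V^2}$. This is a monotone bijection of $[0,\infty)$ onto itself, and its inverse is $V = z/(2\sqrt{1+z})$. It also satisfies $(\sqrt{1+V^2}-V)^2 = (1+z)^{-1}$. By functional calculus these identities lift to positive Hilbert--Schmidt operators, and $z$ commutes with $\sfJ$ exactly when $V$ does. Since $V \leq z/2$ and $z \leq 4V + 2V^2$, the maps $V \mapsto z$ and $z \mapsto V$ both preserve $\cL^2$ (here $V^2 \in \cL^1$). Substituting into \eqref{eq-I.23} with $\vp = \vO$ gives three pieces:
\begin{itemize}
\item the $V$-dependent traces become traces in $z$;
\item the last $\eta$-term becomes $\la G_\nu + k_\nu\eta | (1+z)^{-1}(G_\nu + k_\nu\eta)\ra$;
\item the squared bracket $\big(\Tr[k_\nu V^2] + \la\eta|k_\nu\eta\ra + 2\mathrm{Re}\la\eta|G_\nu\ra\big)^2$ remains.
\end{itemize}

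\textbf{Step 2: elimination of $\eta$.} Here I use the $\sfJ$-symmetry. If $V = \sfJ V \sfJ$ and $\eta = \sfJ\eta$, then $k_\nu$ anticommutes with $\sfJ$, while $G_\nu$ satisfies $\sfJ G_\nu = -G_\nu$ by \eqref{eq-I.02}. From this, each term inside the squared bracket is zero:
\begin{itemize}
\item $\Tr[k_\nu V^2] = 0$;
\item $\la\eta|k_\nu\eta\ra = 0$;
\item $\mathrm{Re}\la\eta|G_\nu\ra = 0$, because $\la\sfJ a|\sfJ b\ra = \overline{\la a|b\ra}$.
\end{itemize}
The squared bracket therefore drops out. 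What remains is quadratic in $\eta$:
\[
\la\eta|A_z\eta\ra + \mathrm{Re}\,\big\la\eta\big|\textstyle\sum_\nu k_\nu(1+z)^{-1}G_\nu\big\ra\cdot 2\cdot\tfrac12 + \text{const}.
\]
Completing the square shows that it is minimized at $\eta_z$ from \eqref{eq-I.24,2}. One must also check that $\eta_z = \sfJ\eta_z$ (it is, by the same parity bookkeeping) and that $\eta_z \in \fh$; the latter holds because $A_z \geq |k| \geq \sigma > 0$. Taking the infimum over $\eta$ first and then over $z$ gives \eqref{eq-I.25}.

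\textbf{Step 3: stationarity.} At a minimizer $z$ I take the directional derivative $\frac{d}{dt}\cE(z + t w)|_{t=0}$ for $\sfJ$-symmetric self-adjoint Hilbert--Schmidt perturbations $w$. For the one-sided case I take $w \geq -z$; since $z \geq 0$ may have kernel, variations of the form $w = (1+z)^{1/2}\,\delta\,(1+z)^{1/2}$ keep positivity for small $t$ of either sign only on the range of $z$, so this needs care. By the envelope principle, the $\eta$-dependence contributes no derivative, because $\eta_z$ is the exact minimizer in $\eta$. One then differentiates $(1+z)^{-1}$ using $\frac{d}{dt}(1+z+tw)^{-1} = -(1+z)^{-1}w(1+z)^{-1}$, and differentiates the trace terms after rewriting them in $z$. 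This should produce a first-order condition $\Tr[w\,X] = 0$ for all admissible $w$, where
\[
X = (1+z)^{-1}\Big[\textstyle\sum_\nu 2|G_\nu + k_\nu\eta_z\ra\la G_\nu + k_\nu\eta_z| - (1+z)A_z(1+z) + A_0 + \tfrac12\sum_\nu k_\nu z k_\nu\Big](1+z)^{-1}
\]
up to positive factors. Since $X$ is $\sfJ$-symmetrizable, it follows that $X = 0$, which is \eqref{eq-I.26}.

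The main obstacle is Step 3 in two respects. First, the traces such as $\Tr[k_\nu V^2 k_\nu(1+V^2)]$ must be re-expressed in $z$ so that their derivative collapses into the form $(1+z)A_z(1+z) - A_0 - \tfrac12\sum_\nu k_\nu z k_\nu$. I would first verify this identity in the commuting (scalar) case and then use the operator chain rule. Second, it must be justified that the minimizer is an interior critical point despite the constraint $z \geq 0$; I would handle this via the substitution $z = y^2$ or by arguing that the derivative vanishes in the relevant directions.
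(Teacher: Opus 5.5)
\textbf{Verdict.} Your Steps~1 and~2 are sound. They reproduce the paper's route to \eqref{eq-I.25} (Lemmas~\ref{lem-III.3}, \ref{lem-III.4} and \ref{lem-IV.1}\,(ii)), only in the opposite order. The envelope argument for the $\eta$-dependence is a legitimate shortcut for differentiating $-\la\xi_z|A_z^{-1}\xi_z\ra$ directly. (Small slip: $z\le 4V+2V^2$ fails once $V>\sqrt3$; the correct bound $z\le 2V+4V^2$ works equally well.)

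\textbf{The gap in Step~3.} Nothing you propose makes the minimizer an interior point, and without that you cannot get \eqref{eq-I.26} as an operator identity. Let $R$ be the right side of \eqref{eq-I.26} minus its left side, so that $\partial_z\cE(z)=\tfrac14(1+z)^{-1}R(1+z)^{-1}$.
\begin{itemize}
\item Since $z$ is compact, $0\in\sigma(z)$, so $z$ is a boundary point of the cone, and you have no way to exclude $\ker z\neq\{0\}$.
\item One-sided variations $w\ge -z$ give $\Tr[w\,\partial_z\cE(z)]\ge 0$, not $=0$.
\item Your substitution $z=y^2$ yields $Rz^{1/2}+z^{1/2}R=0$ (using that $z^{1/2}$ commutes with $1+z$). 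This only gives $R=PRP$, with $P$ the projection onto $\ker z$.
\item One-sided variations then add only $PRP\ge 0$.
\end{itemize}
So you would obtain the Karush--Kuhn--Tucker conditions $R\ge 0$, $Rz=0$. That is \eqref{eq-I.26} on $(\ker z)^\perp$ only, not the claimed identity on all of $\fh$.

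\textbf{The missing idea} is Lemma~\ref{lem-IV.2}: the infimum is unchanged if $z\ge 0$ is relaxed to $z=\sfJ z\sfJ\in\cL^2_{>-1}(\fh)$, that is, if $V\ge0$ is relaxed to arbitrary self-adjoint $V=\sfJ V\sfJ$. This uses Lemma~\ref{lem-IV.1}\,(i), and \eqref{eq-IV.02}--\eqref{eq-IV.04} remain valid on $(-1,\infty)$, so $\cE(z)$ is still the energy there. The reason is that, for fixed $\eta=\sfJ\eta$, replacing $V$ by $|V|$ never increases $\cE_{g,\vO}(V,\eta)$:
\begin{itemize}
\item $\Tr[k_\nu V^2k_\nu(1+V^2)]$ and $\Tr[|k|V^2]$ depend only on $V^2$.
\item With $U=\sqrt{1+V^2}$ and $\sfJ k_\nu\sfJ=-k_\nu$, one has $-\Tr[(k_\nu V\sqrt{1+V^2})^2]=\Tr[(k_\nu V^*\sfJ U)^2]\ge-\Tr[(k_\nu|V|\sqrt{1+V^2})^2]$. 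This is Lemma~\ref{lem-III.1}, a genuinely noncommutative trace inequality.
\item $h(x)=(\sqrt{1+x^2}-x)^2$ is decreasing, so $h(V)\ge h(|V|)$.
\end{itemize}
In the $z$-variable, $V\mapsto -V$ is the reflection $1+z\mapsto(1+z)^{-1}$, so this comparison cannot be read off from \eqref{eq-IV.05}. Consequently a minimizer over $\cL^2_{\ge0}(\fh)$ also minimizes $\cE$ on the open set $\{z=\sfJ z\sfJ\in\cL^2_{>-1/2}(\fh)\}$. Its Fr\'echet derivative therefore vanishes in every $\sfJ$-symmetric direction, and since that derivative is itself $\sfJ$-symmetric, $R=0$. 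This is what closes the kernel block your argument leaves open.
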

%
Note that Eq.~\eqref{eq-I.26} is a statement about Hilbert--Schmidt
operators, even though $A_z$ is not of Hilbert--Schmidt class, but the
difference $(1+z)A_z(1+z)-A_0$ on the right side of \eqref{eq-I.26}
is. Further note that only the left side of \eqref{eq-I.26} depends on
the coupling constant $g$ and is an operator of rank three, hence
Hilbert--Schmidt.

In Sect.~\ref{sec-II}, we introduce Bogoliubov transformations and
derive an energy functional from the characterization~\eqref{eq-I.19}.
Sect.~\ref{sec-III} contains our variational analysis of the BHF
energy, employing the methods of \cite{BachHach2022} to derive the
aforementioned bounds in Theorem~\ref{thm-I.1}. Sect.~\ref{sec-IV}, is
devoted to the study of the parameterization 
$V_z = \frac{z}{2\sqrt{1+z}}$ and its properties, and in
Sect.~\ref{sec-V} we derive the Euler--Lagrange equations for the
simplified functional. 
%

\section{Description of the Problem} \label{sec-II} 
%
This section contains the definition of Bogoliubov transformations in
terms of Bogoliubov maps and their basic properties, as well as a
derivation of a functional for the energy of pure quasifree states.
%

\subsection{Antiunitary Involutions and CCR} \label{subsec-II.1} 
%
Recall from \eqref{eq-I.03} that $\fh = L^2(S_{\sigma, \Lambda} \times
\bbZ_2)$ is the one-photon Hilbert space and that the photon Fock
space $\fF_\ph$ is the boson Fock space $\fF_\ph = \fF_b(\fh)$ over
$\fh$, with number operator $\cN_\ph$.  We assume to be given the
family $\{a^*(f), a(f)\}_{f \in \fh}$ of creation and annihilation
operators which fulfill the canonical commutation relations (CCR)
\begin{align} \label{eq-II.00,1}   
\forall \, f, g \in \fh: \ \ 
[a^*(f), a^*(g)] \, = \, [a(f), a(g)] \, = \, 0 \, , \ \ 
[a(f), a^*(g)] \, = \, \la f | g \ra \, \bfone_\fF \, .
\end{align} 
Also, we require $a(f) \Om = 0$, for all $f \in \fh$.
As usual, $\cB(\fh)$ and $\cL^2(\fh)$ denote the space of bounded
and Hilbert--Schmidt operators, respectively. Given a real number
$\kappa \in \bbR$, we define the open and closed, resp., convex subsets 
\begin{align} \label{eq-II.01,1}   
\cL_{> \kappa}^2(\fh) \ := \ 
\big\{ A \in \cL_{s.a.}^2(\fh) \; \big| \ A > \kappa \big\} 
\ \subseteq \ \cL_{s.a.}^2(\fh) \, ,
\\[1ex] \label{eq-II.01,2}   
\cL_{\geq \kappa}^2(\fh) \ := \ 
\big\{ A \in \cL_{s.a.}^2(\fh) \; \big| \ A \geq \kappa \big\} 
\ \subseteq \ \cL_{s.a.}^2(\fh) \, ,
\end{align} 
of $\cL_{s.a.}^2(\fh)$ containing all self-adjoint
Hilbert--Schmidt operators bounded below by $\kappa$. 
For $k = (\vk,\tau) \in S_{\sigma,\Lambda} \times \bbZ_2$,
with $\vk = (k_1, k_2, k_3)$, we use the customary notation  
$-k := (-\vk, \tau)$ and $|k| := |\vk| = \sqrt{k_1^2 + k_2^2 + k_3^2}$. 
As in \cite{BachHach2022}, we define an antiunitary involution
$\sfJ: \fh \to \fh$ by
\begin{align} \label{eq-II.02}   
[\sfJ \eta](k) \ := \ \ol{\eta(-k)} \, ,
\end{align} 
i.e., a bijection on $\fh$ which fulfills $\sfJ^2 = \bfone_\fh$ and
\begin{align} \label{eq-II.03}
\sfJ( \beta \eta + \xi) \ = \ \ol{\beta} \sfJ(\eta) + \sfJ(\xi) \, , 
\quad 
\la \sfJ(\eta) | \sfJ(\xi) \ra \ = \ \la \xi | \eta \ra \, ,
\end{align} 
for all $\eta, \xi \in \fh$ and $\beta \in \bbC$. Note the special
property 
\begin{align} \label{eq-II.04}
\sfJ \, \vk \, \sfJ \ = \ - \vk
\end{align} 
of the antiunitary involution in \eqref{eq-II.02}. Equivalently,
$\sfJ[ \vk \eta](\vp, \tau) = - \vp \, \sfJ[\eta](\vp, \tau) 
= - \vp \, \ol{\eta(-\vp, \tau)}$,
for $\eta \in \fh$ and 
$(\vp, \tau) \in S_{\sigma,\Lambda} \times \bbZ_2$.

\subsection{Bogoliubov Transformations} \label{subsec-II.2} 
%
We now introduce \textit{Bogoliubov transformations} which are special
unitary transformations $\bbU \in \cU[\fF_\ph]$ on 
photon Fock space described below.

\vspace*{-2ex}

\paragraph{Bogoliubov Maps}
To begin with we consider affine linear maps on creation
  and annihilation operators of the form 
\begin{align} \label{eq-II.05}
\forall f \in \fh: \quad 
b^*(f) \ = \ 
a^*(U f) + a(\sfJ V f) + \la \eta | f \ra \, .
\end{align} 
These maps are parametrized by a vector $\eta \in \fh$ and two
operators $U, V \in \cB(\fh)$. The requirement that the new creation
and annihilation operators $\{b^*(f), b(f)\}_{f \in \fh}$ also fulfill
the CCR~\eqref{eq-II.00,1} leads to \textit{Bogoliubov maps} 
which are defined as
\begin{align} \label{eq-II.06}   
& \Bog_\sfJ'[\fh] 
\ := \ 
\\ \nonumber &
\bigg\{ B \equiv B(U,V) = 
\begin{pmatrix} 
U & \sfJ V \sfJ \\ V & \sfJ U \sfJ
\end{pmatrix} \in \cB(\fh \oplus \fh) \;
\bigg| \ B^* \cS B = \cS  \, , \ \ B \cS B^* = \cS \bigg\} \, ,
\end{align} 
where 
\begin{align} \label{eq-II.07}   
\cS \ := \ 
\begin{pmatrix} 
\bfone_\fh & 0 \\ 0 & - \bfone_\fh 
\end{pmatrix} \, .
\end{align} 
That is, the family $\{b^*(f), b(f)\}_{f \in \fh}$ defined by
\eqref{eq-II.05} fulfills the CCR~\eqref{eq-II.00,1} if, and only if,
$B(U,V) \in \Bog_\sfJ'[\fh]$ is a Bogoliubov map and $\eta \in \fh$.
Note that $\Bog_\sfJ'[\fh] \subseteq \cB(\fh \oplus \fh)$ is a 
subgroup of the automorphisms on $\fh \oplus \fh$ and that
$B(U,V) \in \Bog_\sfJ'[\fh]$ is a Bogoliubov map if, and only if,

\begin{align} \label{eq-II.08} 
U^* U \ = \ \bbone + V^* V \; , & \quad
U^* \sfJ V \ = \ V^* \sfJ U \, ,
\\[1ex] \label{eq-II.09} 
U U^* \ = \ \bbone + \sfJ V V^* \sfJ \; , & \quad
\sfJ U V^* \ = \ V U^* \sfJ \, .
\end{align} 

\vspace*{-2ex}

\paragraph{Proper Bogoliubov Maps}
According to the \textit{Shale--Stinespring condition} the
transformation $a^*(f) \mapsto b^*(f)$ in Eq.~\eqref{eq-II.05} 
determined by $B \equiv B(U,V) \in \Bog_\sfJ'[\fh]$ and with
$\eta =0$ can be implemented as a conjugation 
$b^*(f) = \bbU_{B} a^*(f) \bbU_{B}^* = a^*(U f) + a(\sfJ V f)$ 
by a unitary operator
$\bbU_{B} \in \cU[\fF_\ph]$ if, and only if, $V \in \cL^2(\fh)$
is a Hilbert--Schmidt operator on $\fh$. This leads us to introduce
\textit{proper Bogoliubov maps}
\begin{align} \label{eq-II.10}   
& \Bog_\sfJ[\fh] 
\ := \ 
\\ \nonumber &
\Big\{ B \equiv B(U,V) \in \Bog_\sfJ'[\fh] \; \Big| \ \ 
V \in \cL^2(\fh) \Big\} \ \subseteq \ \Bog_\sfJ'[\fh] \, ,
\end{align} 
which form a subgroup of $\Bog_\sfJ'[\fh]$.

\vspace*{-2ex}

\paragraph{Weyl Transformations} Conversely, if 
$B = \big( \begin{smallmatrix} \bfone & 0 
\\ 0 & \bfone \end{smallmatrix} \big)$ and $\eta \in \fh \setminus \{0\}$
then it is well-known that 
$b^*(f) = a^*(f) + \la \eta | f \ra = \bbW_{\eta} a^*(f)  \bbW_{\eta}^*$,
where $\bbW_\eta := \exp[a^*(\eta) - a(\eta)] \in \cU[\fF_\ph]$ is
the unitary \textit{Weyl operator}.

\vspace*{-2ex}

\paragraph{Bogoliubov transformations} 
Composing $\bbW_\eta$ and $\bbU_{B}$, we arrive at
the group of \textit{Bogoliubov transformations} 
$\bbU_{B} \bbW_\eta \in \cU[\fF_\ph]$ determined by a proper
Bogoliubov map $B = B(U,V) \in \Bog_\sfJ[\fh]$ and a vector 
$\eta \in \fh$, with
\begin{align} \label{eq-II.11}
\forall f \in \fh: \quad 
\bbU_{B} \, \bbW_\eta \, a^*(f) \, \bbW_\eta^* \, \bbU_{B}^* \ = \ 
a^*(U f) + a(\sfJ V f) + \la \eta | f \ra \, .
\end{align} 
Note that, given an orthonormal basis $\{f_n\}_{n=1}^\infty \subseteq \fh$, 
due to $a(U f_n) \Om = 0$,  
\begin{align} \label{eq-II.12} 
\la \bbW_\eta^* \, \bbU_B^* & \Om \: | 
\: \cN \, \bbW_\eta^* \, \bbU_B^* \Om \ra _{\fF_\ph}
\ = \ 
\sum_{n=1}^\infty \| \bbU_B \, \bbW_\eta \, 
a(f_n) \, \bbW_\eta^* \, \bbU_B^* \Om \|^2_{\fF_\ph}
\nonumber \\[1ex] 
\ = \ &
\sum_{n=1}^\infty \big\| \big( a^*(\sfJ V f_n) +
\la f_n | \eta \ra \big) \Om \big\|^2_{\fF_\ph}
\ = \ 
\sum_{n=1}^\infty \big\{ \| V f_n \|^2_{\fh} 
+ |\la \eta | f_n \ra_{\fh}|^2 \big\}
\nonumber \\[1ex] 
\ = \ &
\| V \|_{\cL^2(\fh)}^2 + \|\eta\|^2_{\fh} \, ,
\end{align} 
which shows that the Shale--Stinespring condition ensures that the
particle number expectation value of the vacuum vector stays finite
under Bogoliubov transformations.

\vspace*{-2ex}

\paragraph{Pure Quasifree States as Bogoliubov transforms 
of the Vacuum Projection} Another important fact we use is that the
set of pure quasifree states is the orbit of the vacuum projection
$|\Om\ra\la\Om|$ under Bogoliubov transformations. 
That is, $\rho \in \DM$ is quasifree and pure if, and only if, there
exists $B \in \Bog_\sfJ[\fh]$ and $\eta \in \fh$ such that 
$\rho = |\bbW_\eta^* \bbU_{B}^* \Om \ra \la \bbW_\eta^* \bbU_{B}^* \Om|$.
Together with \eqref{eq-I.19}, this implies that
\begin{align} \label{eq-II.13}   
E_\BHF(\vp) \ = \ 
\inf\big\{ \tcE_{g,\vp}(U, V, \eta) \; \big| \ 
B(U,V) \in \Bog_\sfJ[\fh], \ \eta \in \fh \big\} \, ,
\end{align} 
where
\begin{align} \label{eq-II.14}   
\tcE_{g,\vp}(U, V, \eta) \ := \ 
\big\la \Om \, \big| \, \bbU_{B(U,V)} \bbW_\eta 
H_{g,\vp} \, \bbW_\eta^* \bbU_{B(U,V)}^* \Om \big\ra_{\fF_\ph} \, .
\end{align} 
%
\begin{lem} \label{lem-II.1}
Let $g \in \bbR$ and $\vp \in \bbR^3$. For 
$B(U,V) \in \Bog_\sfJ[\fh]$ and $\eta \in \fh$, the BHF energy functional
is given by
\begin{align} \label{eq-II.15}  
\tcE_{g,\vp}(U,V,\eta)
\ = \ 
\cS_{g,\vp}(V,\eta) + \cT(U,V) + \cQ_{g}(U,V,\eta) + \cI(V,\eta) \, ,
\end{align} 
where the square term $\cS_{g,\vp}$, the trace term $\cT$, the
quadratic form term $\cQ_{g}$, and the field term $\cI$ are given by
\begin{align} \label{eq-II.16} 
\cS_{g,\vp}(V,\eta) 
\ := \ & 
\frac{1}{2} \sum_{\nu=1}^3 \Big\{ 
\Tr[k_\nu V^* V] + \la \eta| k_\nu \eta \ra 
+ 2 \rRe \la \eta |G_\nu \ra - p_\nu \Big\}^2 \, ,
\\ \label{eq-II.17} 
\cT(U,V) 
\ := \ & 
\frac{1}{4} \sum_{\nu=1}^3 \Big\{ 
\Tr\big[ (V^* \sfJ  U k_\nu)^2 \big] +
\Tr\big[k_\nu V^* V k_\nu (1+V^* V)\big] \Big\} \, ,
\\ \label{eq-II.18} 
\cQ_{g}(U,V,\eta)
\ := \ & 
\frac{1}{2} \sum_{\nu=1}^3 \Big\{ 
\big\la G_\nu + k_\nu \eta \: \big| \: 
(1+2V^* V) (G_\nu + k_\nu \eta) \big\ra
\nonumber \\ 
& \qquad \quad + 2 \rRe\big\la G_\nu + k_\nu \eta \:\big| \: 
V^* \sfJ U (G_\nu+k_\nu \eta) \big\ra \Big\} \, ,
\\ \label{eq-II.19}  
\cI(V,\eta) 
\ := \ &
\Tr\big( |k| V^* V \big) + \la \eta | |k|\eta \ra \, . 
\end{align} 
\begin{proof}
We omit the proof because it is merely a computation.
\end{proof}
\end{lem}
%
Note that this result does not depend on the special choice of $\sfJ$
and holds true for any antiunitary involution.

\section{Variational Analysis} \label{sec-III}
%
In this section we prove Theorem~\ref{thm-I.1}. Our proof borrows
ideas from \cite[Lemma~IV.8]{BachHach2022}. We begin with the 
following Lemma.
%
\begin{lem} \label{lem-III.1}
Let $B(U,V) \in \Bog[\fh]$ and $\eta \in \fh$. Then
\begin{align} \label{eq-III.01}   
\Tr\big\{ \big( k_\nu \, V^* \, \sfJ  \, U \big)^2 \big\} 
\ \geq \ 
-\Tr\big\{ \big( k_ \nu \, |V| \, \sqrt{1+|V|^2} \big)^2 \big\} \, .
\end{align} 
Furthermore, if $U, V \geq 0$ and $\sfJ V\sfJ =V$ then 
\eqref{eq-III.01} is an equality.
%

\begin{proof}
For $\eps>0$ let
\begin{align} \label{eq-III.01,1}   
D_\eps := (\eps + VV^*) \, (1+VV^*)^{-1} 
\ \ \text{and} \ \ 
B_\eps := (\eps + V^*V) \, (1+V^*V) \, .
\end{align} 
Due to the spectral theorem and using $|V|^2 = V^*V$, 
\begin{align} \label{eq-III.01,2}   
D_\eps^\beta = (\eps + VV^*)^\beta (1+VV^*)^{-\beta} 
\ \ \text{and} \ \ 
B_\eps^\beta = (\eps + |V|^2)^\beta (1+|V|^2)^{\beta} \, ,
\end{align} 
for any $\beta \in [-1,1]\setminus\{0\}$. 
By \eqref{eq-II.08}-\eqref{eq-II.09}, we
have that
\begin{align} \label{eq-III.02}   
V V^* \, \sfJ U \ = \ V U^* \sfJ V \ = \ \sfJ U \, V^*V 
\quad \text{and} \quad
VV^* \, V \ = \ V \, V^*V \, ,
\end{align} 
which implies first that
\begin{align} \label{eq-III.03,1}   
(r + V V^*)^{-1} \, V \ = \ & V \, (r + V^*V)^{-1} \, ,
\\[1ex] \label{eq-III.03,2}   
(r + V V^*)^{-1} \, \sfJ U \ = \ & \sfJ U \, (r + V^* V)^{-1} \, ,
\end{align} 
for any $r >0$. Using the identity 
$A^{-\alpha} = \int_0^\infty (r + A)^{-1} 
\frac{\sin(\pi \alpha) \, dr}{\pi \, r^\alpha}$
for $\alpha \in (0,1)$ and strictly positive self-adjoint 
operators $A \geq \mu \bfone >0$ 
(see, e.g., \cite[Theorem~1.4.7]{Caps2002}) and 
Identity~\eqref{eq-III.03,1}, we then obtain the quadratic form estimate
\begin{align} \label{eq-III.04,1}   
V^* \, D_\eps^{-1/2} \, V 
\ = \ &
V^* \, (\eps + VV^*)^{-1/2} (1+VV^*)^{1/2} \, V 
\nonumber \\[1ex] 
\ = \ &
V^*V \, (\eps + V^*V)^{-1/2} (1+V^*V)^{1/2} 
\ \leq \ B_0^{1/2} \, .
\end{align} 
Similar to \eqref{eq-III.04,1}, we obtain from
\eqref{eq-III.03,2} and \eqref{eq-II.08} the identity
\begin{align} \label{eq-III.04,2}   
U^* \sfJ D_\eps^{1/2} \sfJ U 
\ = \ &
U^* \sfJ (\eps + VV^*)^{1/2} (1+VV^*)^{-1/2} \sfJ U  
\nonumber \\[1ex] 
\ = \ &
U^*U \, (\eps + V^*V)^{1/2} (1+V^*V)^{-1/2} 
\ = \ B_\eps^{1/2} \, .
\end{align} 
Next, let $M_U$ be a partial isometry in a polar decomposition $U =
M_U |U|$. Using again \eqref{eq-III.03,2} and \eqref{eq-II.08}, we
arrive at
\begin{align} \label{eq-III.04,3}   
D_\eps^{1/4} \sfJ U 
\ = \ &
(\eps + VV^*)^{1/4} (1+VV^*)^{-1/4} \sfJ U  
\\[1ex] \nonumber 
\ = \ &
\sfJ M_U |U| \, (\eps + V^*V)^{1/4} (1+V^*V)^{-1/4} 
\ = \ 
\sfJ M_U B_\eps^{1/4} \, .
\end{align} 
Since $VV^*$ is a self-adjoint trace-class operator, there exists an 
orthonormal basis $\{\phi_n\}_{n=1}^\infty \subseteq \fh$ of eigenvectors 
of $VV^*$, and, for every $N \in \bbN$, the rank-$N$ projection 
\begin{align} \label{eq-III.05}   
P_N \ := \ \sum_{n=1}^N |\phi_n\ra \la \phi_n|
\end{align} 
commutes with $D_\eps^{\beta}$. The Cauchy--Schwarz inequality
yields
\begin{align} \label{eq-III.06}   
\big| \Tr & [ P_N \sfJ U k_\nu U^* \sfJ V k_\nu V^*] \big|^2
\nonumber \\[1ex] 
\ = \ &
\big| \Tr[ P_N \, D_\eps^{1/4} \sfJ U k_\nu 
U^* \sfJ  D_\eps^{1/4} \, D_\eps^{-1/4} V k_\nu V^* D_\eps^{-1/4}] \big|^2
\ \leq \ 
R_1 \cdot R_2 \, ,
\end{align} 
where
\begin{align} \label{eq-III.07}   
R_2 \ := \ &
\Tr\Big[ D_\eps^{-1/4} V k_\nu V^* D_\eps^{-1/2} V k_\nu V^* D_\eps^{-1/4} ] 
\ \leq \ 
\Tr\Big[ D_\eps^{-1/4} V k_\nu B_0^{1/2} k_\nu V^* D_\eps^{-1/4} ] 
\nonumber \\[1ex]
\ = \ &
\Tr\Big[  B_0^{1/4} k_\nu V^* D_\eps^{-1/2} V k_\nu B_0^{1/4} ]
\ \leq \ 
\Tr\Big[ k_\nu B_0^{1/2} k_\nu B_0^{1/2} ] \, ,
\end{align} 
additionally using \eqref{eq-III.04,1} twice, and   
\begin{align} \label{eq-III.08}   
R_1 \ := \ &
\Tr\Big[ P_N \, D_\eps^{1/4} \sfJ U k_\nu U^* \sfJ D_\eps^{1/2} 
\sfJ U k_\nu U^* \sfJ \, D_\eps^{1/4}  P_N ] 
\nonumber \\[1ex]
\ = \ &
\Tr\Big[ P_N \, \sfJ M_U B_\eps^{1/4} k_\nu B_\eps^{1/2} k_\nu 
B_\eps^{1/4} M_U^* \sfJ \, P_N ] \, ,
\end{align} 
using \eqref{eq-III.04,2} and \eqref{eq-III.04,3}. Next, by the spectral
theorem and the fundamental theorem of calculus, we have that
\begin{align} \label{eq-III.09}   
0 \ \leq \ & B_\eps^\beta - B_0^\beta 
\ = \ 
(1+|V|^2)^{\beta/2} 
\Big( \int_0^\eps \frac{\beta \: d\tau}{(\tau+|V|^2)^{1-\beta}} \Big) 
(1+|V|^2)^{\beta/2}
\nonumber \\[1ex]
\ \leq \ &
 \eps^\beta (1+|V|^2)^\beta \, ,
\end{align} 
and hence 
\begin{align} \label{eq-III.10}   
\big\| B_\eps^\beta - B_0^\beta \big\|_\op 
\ \leq \  
 \eps^\beta (1+\|V\|^2)^\beta \, ,
\end{align} 
for any $\beta \in (0,1)$. It follows that 
\begin{align} \label{eq-III.11}   
\lim_{\eps \to 0} \big\{
\sfJ M_U B_\eps^{1/4} k_\nu B_\eps^{1/2} k_\nu B_\eps^{1/4} M_U^* \sfJ \big\}
\ = \ 
\sfJ M_U B_0^{1/4} k_\nu B_0^{1/2} k_\nu B_0^{1/4} M_U^* \sfJ 
\end{align} 
converges in operator norm. Since $P_N$ is of finite rank, the trace
on the right side of \eqref{eq-III.08} convergences, as well, namely
\begin{align} \label{eq-III.12}   
\lim_{\eps \to 0} 
\Tr[ & P_N \, \sfJ M_U B_\eps^{1/4} k_\nu B_\eps^{1/2} k_\nu 
B_\eps^{1/4} M_U^* \sfJ \, P_N ] 
\nonumber \\[1ex]
\ = \ &
\Tr[ P_N \, \sfJ M_U B_0^{1/4} k_\nu B_0^{1/2} k_\nu 
B_0^{1/4} M_U^* \sfJ \, P_N ] 
\\[1ex] \nonumber 
\ = \ &
\Tr[ B_0^{1/4} k_\nu B_0^{1/4} M_U^* \sfJ \, P_N \, 
\sfJ M_U B_0^{1/4} k_\nu B_0^{1/4} ] 
\ \leq \ 
\Tr[ B_0^{1/2} k_\nu B_0^{1/2} k_\nu ] \, ,
\end{align} 
using that 
$M_U^* \sfJ P_N \sfJ M_U \leq M_U^* \sfJ^2 M_U = M_U^* M_U \leq \bbone$. 
Inserting these estimates for $R_1$ and $R_2$ into \eqref{eq-III.06},
we conclude 
\begin{align} \label{eq-III.13}   
\big| \Tr[P_N \sfJ U k_\nu U^* \sfJ V k_\nu V^*] \big|
\ \leq \ 
\Tr[k_\nu B_0^{1/2} k_\nu B_0^{1/2}],
\end{align} 
for any $N \in \bbN$. Moreover, as a product of the bounded
operators $\sfJ U k_\nu U^* \sfJ$ and $k_\nu$ and the two 
Hilbert--Schmidt operators $V$ and $V^*$, the
operator $\sfJ U k_\nu U^* \sfJ V k_\nu V^*$ is
trace-class, and 
\begin{align} \label{eq-III.14}   
\big| \Tr[\sfJ U k_\nu U^* \sfJ V k_\nu V^*] \big| 
\ = \ &
\lim_{N \to \infty}
\big| \Tr[P_N \sfJ U k_\nu U^* \sfJ V k_\nu V^*] \big| 
\\[1ex] \nonumber
\ \leq \ & 
\Tr[k_\nu B_0^{1/2} k_\nu B_0^{1/2}] \, ,
\end{align} 
which yields the desired inequality. The second statement follows
directly from $\sfJ k_\nu \sfJ =-k_\nu$.
\end{proof}
\end{lem}
%

The other term in the BHF energy functional that contains $V^* \sfJ U$
is of the form 
$\la G_\nu + k_\nu \eta | V^* \sfJ U (G_\nu+k_\nu \eta) \ra$. It
obeys an inequality analogous to \eqref{eq-III.01} whose proof,
however, is simpler than the one for Lemma~\ref{lem-III.1} 
due to the fact that $|G_\nu+k_\nu \eta \ra \la G_\nu+k_\nu \eta |$ 
is of rank one and no infinite sums are involved. 
We omit the proof and only state the estimate in the following lemma.
%
\begin{lem}  \label{lem-III.2}
Let $B(U,V) \in \Bog[\fh]$ and $\eta \in \fh$. Then
\begin{align} \label{eq-III.15}    
\left|\big\la G_\nu +k_\nu \eta \big|\: V^* \sfJ U (G_\nu +k_\nu \eta)
\big\ra\right| 
\ \geq \ 
-\big\la G_\nu +k_\nu \eta \big| \: |V| \sqrt{1+|V|^2} 
        (G_\nu +k_\nu \eta) \big\ra \, .
\end{align} 
Furthermore, if $U, V\geq 0$, $\sfJ V\sfJ =V$ and $\sfJ \eta=\eta$ then
\eqref{eq-III.15} is an equality.
\end{lem}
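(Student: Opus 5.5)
Set $\varphi := G_\nu + k_\nu \eta \in \fh$. The displayed inequality \eqref{eq-III.15} is trivially true as printed, since its left side is nonnegative and its right side is nonpositive. What is actually needed later in the energy functional is the real-part bound
\begin{align*}
\rRe \big\la \varphi \big| V^* \sfJ U \varphi \big\ra
\ \geq \
- \big\la \varphi \big| \, |V| \sqrt{1+|V|^2} \, \varphi \big\ra \, .
\end{align*}
I will prove the stronger statement $|\la \varphi | V^* \sfJ U \varphi\ra| \leq \la \varphi | B_0^{1/2} \varphi\ra$. Here $B_0^{1/2} = |V|\sqrt{1+|V|^2}$ in the notation of the proof of Lemma~\ref{lem-III.1}. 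This stronger bound implies both the real-part bound and \eqref{eq-III.15}.

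The argument is a rank-one version of the proof of Lemma~\ref{lem-III.1}, with the operators $D_\eps$ and $B_\eps$ from \eqref{eq-III.01,1}. First split the inner product with the invertible, bounded operator $D_\eps^{1/4}$, which commutes with $VV^*$:
\begin{align*}
\big\la \varphi \big| V^* \sfJ U \varphi \big\ra
\ = \
\big\la D_\eps^{-1/4} V \varphi \big| D_\eps^{1/4} \sfJ U \varphi \big\ra \, .
\end{align*}
Then apply Cauchy--Schwarz. For the first factor, \eqref{eq-III.04,1} gives $\|D_\eps^{-1/4} V\varphi\|^2 = \la \varphi | V^* D_\eps^{-1/2} V \varphi\ra \leq \la \varphi | B_0^{1/2}\varphi\ra$. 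For the second factor, $\|D_\eps^{1/4}\sfJ U\varphi\|^2 = \la \sfJ U \varphi | D_\eps^{1/2} \sfJ U \varphi\ra$. This is real, so it equals $\la \varphi | U^* \sfJ D_\eps^{1/2} \sfJ U \varphi\ra$; the antiunitarity of $\sfJ$ only produces a complex conjugate of a real number. By \eqref{eq-III.04,2} this equals $\la \varphi | B_\eps^{1/2} \varphi \ra$.

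Finally let $\eps \to 0$. Since $\|B_\eps^{1/2} - B_0^{1/2}\|_\op \to 0$ by \eqref{eq-III.10}, we obtain
\begin{align*}
|\la \varphi | V^* \sfJ U \varphi\ra|^2 \ \leq \ \la \varphi | B_0^{1/2}\varphi\ra^2 \, .
\end{align*}
Unlike Lemma~\ref{lem-III.1}, no trace or finite-rank truncation $P_N$ is needed, because only a single vector is involved. The only delicate point, and the one I expect to require the most care, is the bookkeeping of the antilinear $\sfJ$ when converting $\la \sfJ U\varphi | D_\eps^{1/2}\sfJ U\varphi\ra$ into the quadratic form of $U^*\sfJ D_\eps^{1/2}\sfJ U$.

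For the equality case, assume $U,V\geq 0$, $\sfJ V \sfJ = V$ and $\sfJ\eta = \eta$.

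1. From \eqref{eq-II.08} and $U \geq 0$ we get $U = \sqrt{1+V^2}$.
2. Since $\sfJ V\sfJ = V$, $\sfJ$ commutes with every real Borel function of $V$. Hence $V^*\sfJ U = V\sqrt{1+V^2}\,\sfJ$.
3. $\vG$ is real-valued and odd by \eqref{eq-I.02}, so $\sfJ G_\nu = -G_\nu$.
4. By \eqref{eq-II.04}, $\sfJ k_\nu \eta = -k_\nu \sfJ\eta = -k_\nu\eta$.
5. Together, $\sfJ\varphi = -\varphi$, and therefore $\la \varphi | V^*\sfJ U\varphi\ra = -\la\varphi | V\sqrt{1+V^2}\,\varphi\ra$.

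This value is real and equals the right side of the real-part bound exactly, which is the equality statement in the intended sense.
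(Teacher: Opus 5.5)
Your proof is correct. It is exactly the rank-one version of the Cauchy--Schwarz argument of Lemma~\ref{lem-III.1} that the paper has in mind when it omits this proof: you split with $D_\eps^{\pm1/4}$, bound the two factors via \eqref{eq-III.04,1} and \eqref{eq-III.04,2}, let $\eps\to0$ using \eqref{eq-III.10}, and treat the equality case through $\sfJ\varphi=-\varphi$. You are also right that the printed inequality is vacuous, and that what \eqref{eq-III.18} actually uses is the lower bound on $\rRe\la\varphi|V^*\sfJ U\varphi\ra$; your stronger estimate $|\la\varphi|V^*\sfJ U\varphi\ra|\leq\la\varphi|\,|V|\sqrt{1+|V|^2}\,\varphi\ra$ implies it, and it becomes an equality under the stated symmetry assumptions.
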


\paragraph{Proof of Theorem~\ref{thm-I.1}.}
Recall from \eqref{eq-II.15} the energy functional
\begin{align} \label{eq-III.16}  
\tcE_{g,\vp}(U,V,\eta)
\ = \ 
\cS_{g,\vp}(V,\eta) + \cT(U,V) + \cQ_{g}(U,V,\eta) + \cI(V,\eta) \, ,
\end{align} 
and note that $U$ enters only $\cT$ and $\cQ_{g}$. Using
Lemmata~\ref{lem-III.1} and \ref{lem-III.2}, we obtain the
following lower bounds on the trace term
\begin{align} \label{eq-III.17}   
\cT(U,V) \ \geq \ 
\frac{1}{4} \sum_{\nu=1}^3 \Big\{ 
- \Tr\big[ (k_\nu |V| \sqrt{1+|V|^2} )^2 \big] +
\Tr\big[ k_\nu |V|^2 k_\nu (1+|V|^2) \big] \Big\} 
\end{align} 
and on the quadratic form term
\begin{align} \label{eq-III.18}   
\cQ_{g}(U,V,\eta) 
\ \geq \ &
\frac{1}{2} \sum_{\nu=1}^3 \Big\{ 
\big\la G_\nu+k_\nu \eta \: \big| \: (1+2|V|^2) (G_\nu+k_\nu \eta) \big\ra
\nonumber \\ &
-2 \big\la G_\nu + k_\nu \eta \: \big| \: |V|\sqrt{1+|V|^2} 
(G_\nu+k_\nu \eta) \big\ra \Big\} 
\\[1ex] \nonumber 
\ = \ & 
\frac{1}{2} \sum_{\nu=1}^3 \Big\{ 
\Big\la G_\nu + k_\nu \eta \: \Big| \: 
\big( \sqrt{1+|V|^2}-|V| \big)^2 (G_\nu+k_\nu \eta) \Big\ra \Big\} \, .
\end{align} 
This shows that, for every admissible $B(U,V) \in \Bog[\fh]$ and $\eta
\in \fh$, we can find a positive Hilbert--Schmidt operator $W := |V|$
such that $\cE_{g,\vp}(W,\eta) \leq \tcE_{g,\vp}(U,V,\eta)$, which
implies the lower bound $\LBOUND(\vp) \leq E_\BHF(\vp)$. The upper bound
$E_\BHF(\vp) \leq \UBOUND(\vp)$ follows immediately from the conditions 
for equality to hold in \eqref{eq-III.01} of Lemma~\ref{lem-III.1} 
and in \eqref{eq-III.15} of Lemma~\ref{lem-III.2}. 
\hspace*{\fill} $\square$

Note that the choice of $\sfJ$ and its special properties 
$\sfJ k_\nu \sfJ = - k_\nu$ and $\sfJ G_\nu=-G_\nu$ plays an important
role in establishing the upper bound, while the lower bound does not
require them. The resulting additional constraints $\sfJ V\sfJ =V$ and
$\eta=\sfJ \eta$ on the admissible variations in the upper bound lead to
\begin{align} \label{eq-III.19}   
\rRe\la \eta | G_\nu \ra
\ = \ 
\rRe \la \sfJ  G_\nu | \sfJ \eta\ra 
\ = \ 
-\rRe\la G_\nu|\eta\ra,
\end{align} 
implying that $\rRe\la \eta | G_\nu \ra = 0$. Similarly, under these
contraints we have that 
$\Tr[k_\nu V^2] = 0$ and $\la\eta |k_\nu \eta \ra = 0$. 
We conclude:
%
\begin{lem}\label{lem-III.3}
Let $V=\sfJ V \sfJ\in\cL^2_{\geq 0}(\fh)$ and $\eta =\sfJ \eta\in\fh$. 
Then the square term
\begin{align} \label{eq-III.20}  
\cS_{g,\vp}(V,\eta) \ = \ \vp^{\; 2} \, , 
\end{align} 
does not depend on $V$ and $\eta$, and in particular
\begin{align} \label{eq-III.21}   
\UBOUND(\vp) \ = \ \UBOUND(\vO) + \vp^{\; 2} \, .
\end{align} 
\end{lem}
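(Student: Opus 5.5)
The plan is to show that, under the constraints $V=\sfJ V\sfJ$ and $\eta=\sfJ\eta$, each of the three quantities inside the braces of $\cS_{g,\vp}$ in \eqref{eq-II.16} (with $V^*V=V^2$) vanishes. These are $\Tr[k_\nu V^2]$, $\la \eta|k_\nu\eta\ra$ and $\rRe\la\eta|G_\nu\ra$. Once they vanish, only $-p_\nu$ remains in each bracket, and $\cS_{g,\vp}(V,\eta)$ reduces to a constant depending only on $\vp$. Note that \eqref{eq-II.16} carries a prefactor $\frac12$, so literally this constant is $\frac12\sum_\nu p_\nu^2=\frac12\vp^{\,2}$. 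I would therefore read the $\vp^{\,2}$ in \eqref{eq-III.20}–\eqref{eq-III.21} with that normalization in mind; the rest of the argument is insensitive to the value of the constant.

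\emph{Step 1 (the vector $G_\nu$).} I will verify $\sfJ G_\nu=-G_\nu$. By \eqref{eq-I.08}, $G_\nu(k,\tau)=g\,\veps_\tau(\vk)_\nu|\vk|^{-1/2}$ is real-valued, so $[\sfJ G_\nu](k,\tau)=g\,\veps_\tau(-\vk)_\nu|\vk|^{-1/2}=-G_\nu(k,\tau)$ by \eqref{eq-I.02}. Then \eqref{eq-III.19}, which uses the antiunitarity property \eqref{eq-II.03}, gives $\rRe\la\eta|G_\nu\ra=-\rRe\la\eta|G_\nu\ra$, hence $\rRe\la\eta|G_\nu\ra=0$.

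\emph{Step 2 (the quadratic terms).} For $\la\eta|k_\nu\eta\ra$, which is real since $k_\nu$ is a bounded self-adjoint multiplication operator, I use $\eta=\sfJ\eta$, \eqref{eq-II.03} and \eqref{eq-II.04}:
\[
\la\eta|k_\nu\eta\ra=\la\sfJ k_\nu\eta|\sfJ\eta\ra=-\la k_\nu\sfJ\eta|\sfJ\eta\ra=-\la\eta|k_\nu\eta\ra ,
\]
so it vanishes. For the trace, $k_\nu V^2$ is trace class because $V\in\cL^2(\fh)$ and $|k_\nu|\le\Lambda$. Its trace $\Tr[Vk_\nu V]$ is real. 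Pick an orthonormal basis $\{f_n\}$; then $\{\sfJ f_n\}$ is again an orthonormal basis, and \eqref{eq-II.03} yields $\Tr[\sfJ A\sfJ]=\ol{\Tr[A]}$ for trace-class $A$. Applying this with $\sfJ k_\nu V^2\sfJ=(\sfJ k_\nu\sfJ)(\sfJ V\sfJ)^2=-k_\nu V^2$ gives $\Tr[k_\nu V^2]=-\ol{\Tr[k_\nu V^2]}=-\Tr[k_\nu V^2]$, so it is $0$.

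\emph{Step 3 (the shift identity \eqref{eq-III.21}).} In \eqref{eq-I.23}, $\vp$ enters only through the square term; all other terms are independent of $\vp$. Hence on the constraint set of \eqref{eq-I.22}, $\cE_{g,\vp}(V,\eta)=\cE_{g,\vO}(V,\eta)+c(\vp)$, where $c(\vp)$ is the constant found above and $c(\vO)=0$. Since the admissible set does not depend on $\vp$, taking the infimum over it gives \eqref{eq-III.21}.

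The only step needing some care is the antiunitary trace identity $\Tr[\sfJ A\sfJ]=\ol{\Tr A}$ together with the trace-class justification. This is routine once one notes that $\sfJ$ maps orthonormal bases to orthonormal bases and that $\la f|\sfJ A\sfJ f\ra=\ol{\la \sfJ f|A\sfJ f\ra}$.
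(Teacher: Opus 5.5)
Your proof is correct and is essentially the paper's argument: the paper likewise observes, via \eqref{eq-III.19} and the properties $\sfJ k_\nu \sfJ=-k_\nu$, $\sfJ G_\nu=-G_\nu$, that $\rRe\la\eta|G_\nu\ra$, $\la\eta|k_\nu\eta\ra$ and $\Tr[k_\nu V^2]$ vanish on the constraint set, and you supply details it leaves implicit, namely the check of $\sfJ G_\nu=-G_\nu$ from \eqref{eq-I.02} and the identity $\Tr[\sfJ A\sfJ]=\ol{\Tr[A]}$. Your normalization remark is also right: because of the prefactor $\frac{1}{2}$ in \eqref{eq-II.16} and \eqref{eq-I.23}, the constant is $\frac{1}{2}\vp^{\,2}$, so \eqref{eq-III.20} and \eqref{eq-III.21} should carry $\frac{1}{2}\vp^{\,2}$ rather than $\vp^{\,2}$.
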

%
Equation \eqref{eq-III.21} is consistent with the fact that $E_{\gs}(\vec{0})\leq E_\gs(\vp)$, for every $\vp\in\mathbb{R}^3$, as has been shown in \cite{Hiroshima2007}.
On the other hand, suppose that $\delta>0$ and $\vp\neq 0$ such that $B(\vp,\delta)\subseteq S_{\sigma,\Lambda} $, compare \eqref{eq-I.01}, and let
\begin{align}\label{eq-III.22}
\varphi_\delta:=\frac{\delta^{-3/2}}{4\pi}\bfone_{B(\vp,\delta)},
\end{align}
be a $L^2$-normalized localisation. By computing
\begin{align}\label{eq-III.22.1}
&\la a^*(\varphi_\delta)\Omega\:|\:H_{g,\vp} \:a^*(\varphi_\delta)\Omega \ra 
\\ \label{eq-III.22.2}
=& \
\frac{1}{2} \sum_{\nu=1}^3 \big\{\|( k_\nu - p_\nu ) \varphi_\delta \|^2 + 2\|G_\nu\otimes_{s} \varphi_\delta \|^2 +|\la  \varphi_\delta |G_\nu \ra|^2\big\}
\: + \: \big\la \varphi_\delta\big| |k| \varphi_\delta\big\ra
\\ \label{eq-III.22.3}
\leq &
\frac{1}{2} \delta^2+\frac{1}{2} \|\vG\|^2+\delta^3\frac{16\pi^2g^2}{|\vp|-\delta}+|\vp| + \delta
\end{align}
we obtain the well known upper bound
\begin{align}\label{eq-III.22.4}
E_\gs(\vp)\leq \frac{1}{2} \|\vG\|^2+|\vp|\,.
\end{align}
which shows that $\UBOUND(\vp)$ is an accurate upper bound on $E_\BHF(\vp)$ only for small values of $|\vp|<1$, if at all.
%

For this reason we henceforth focus on $\vp = \vO$ and analyze 
$\UBOUND(\vO)$. By \eqref{eq-III.20}, $\cS_{g,\vO}(V,\eta) = 0$ in this case,
and the terms $\cQ_{g}(\sqrt{1+|V|^2}, V, \eta)$ and $\cI(V,\eta)$ 
are quadratic in $\eta$ which, for a given $V = \sfJ V \sfJ \geq 0$,
allows us to determine the optimal choice 
$-\hA_V^{-1} \hxi_V = - \sfJ \hA_V^{-1} \hxi_V$ of $\eta$ explicitly by
completing a square. Indeed, according to \eqref{eq-III.18} and
\eqref{eq-II.19}, we have that

\begin{align} \label{eq-III.23}  
\cQ_{g}( & \sqrt{1+|V|^2}, V, \eta) + \cI(V,\eta) 
- \cQ_{g}(\sqrt{1+|V|^2}, V, 0) - \cI(V,0) 
\\[1ex] \nonumber 
& \ = \
\la \eta \, | \, \hA_V \eta \ra \, + \, 
2 \rRe \la \eta \, | \, \hxi_V \ra 
\ = \ 
\big\| \hA_V^{1/2} \eta + \hA_V^{-1/2} \hxi_V \big\|^2 \, - \,  
\big\la \hxi_V \, \big| \, \hA_V^{-1} \hxi_V \big\ra \, ,
\end{align} 
where 
\begin{align} \label{eq-III.24} 
\hA_V \ := \ & 
|k| + \frac{1}{2} \sum_{\nu=1}^3 
k_\nu \big( \sqrt{1+V^2} - V \big)^2 k_\nu \, ,
\\[1ex] \label{eq-III.25} 
\hxi_V \ := \ &
\frac{1}{2} \sum_{\nu=1}^3 
k_\nu \big( \sqrt{1+V^2} - V \big)^2 G_\nu \, .
\end{align} 
We introduce some new notation,
\begin{align} \label{eq-III.26}  
\hcE(V) \ := \ & \cE_{g,\vO}\big( V, - \hA_V^{-1} \hxi_V \big)
\\[1ex] \nonumber 
\ = \ & 
\frac{1}{2} \sum\nolimits_{\nu=1}^3 \Big\{\Tr[k_\nu V^2 k_\nu (1+V^2)]
-\Tr[(k_\nu V \sqrt{1+V^2})^2]
\\ \nonumber & 
\quad + \big\la G_\nu|(\sqrt{1+V^2}-V)^2 G_\nu\big\ra \Big\} 
+ \Tr\big[ \:|k| V^2 \big] - \big\la \hxi_V | \hA_V^{-1} \hxi_V \big\ra \, ,
\end{align} 
for the energy functional with the optimal choice of $\eta$ and
observe that 
%
\begin{lem}\label{lem-III.4}
At $\vp =\vO$, the upper bound in Theorem~\ref{thm-I.1} fulfills the
simplified variational characterisation
\begin{align} \label{eq-III.27}  
\UBOUND(\vO) \ = \ 
\inf\Big\{ \hcE(V)  \; \Big| 
\ V = \sfJ V \sfJ \in \cL_{\geq 0}^2(\fh) \Big\} \, .
\end{align} 
\end{lem}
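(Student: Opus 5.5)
We first rewrite $\UBOUND(\vO)$ as an iterated infimum,
\[
\UBOUND(\vO) \;=\; \inf_{V=\sfJ V\sfJ\in\cL^2_{\geq0}(\fh)}\ \inf_{\eta=\sfJ\eta\in\fh}\ \cE_{g,\vO}(V,\eta).
\]
The goal is then to show that, for each fixed admissible $V$, the inner infimum equals $\hcE(V)$ and is attained at $\eta_V:=-\hA_V^{-1}\hxi_V$. The proof splits into three steps.

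\textbf{Step 1: reduce to a quadratic problem in $\eta$.} At $\vp=\vO$ with the constraints $V=\sfJ V\sfJ$ and $\eta=\sfJ\eta$, Lemma~\ref{lem-III.3} gives that the squared term vanishes. What remains of $\cE_{g,\vO}(V,\eta)$ as a function of $\eta$ is exactly the $\cQ_g$-lower bound \eqref{eq-III.18} plus $\cI(V,\eta)$. The identity \eqref{eq-III.23} then yields
\[
\cE_{g,\vO}(V,\eta)=\hcE(V)+\big\|\hA_V^{1/2}\eta+\hA_V^{-1/2}\hxi_V\big\|^2 .
\]
This requires $\eta\in\cD(\hA_V^{1/2})$. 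That holds automatically here, because $|k|$ and $k_\nu$ are bounded on $S_{\sigma,\Lambda}$, so $\hA_V$ is bounded. It is also strictly positive, since $\hA_V\geq|k|\geq\sigma>0$. Hence $\hA_V^{\pm1/2}$ are bounded and the completion of the square is legitimate on all of $\fh$.

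In the same step we check that the expression for $\hcE(V)$ in \eqref{eq-III.26} is what remains when $\eta=0$. Two routine verifications are needed:
\begin{itemize}
\item the constant terms $\cQ_g(\cdot,V,0)+\cI(V,0)$ reproduce the $G_\nu$-term together with $\Tr[|k|V^2]$;
\item the cross term $2\rRe\la \eta|\hxi_V\ra$ and the quadratic term $\la\eta|\hA_V\eta\ra$ arise from expanding $\la G_\nu+k_\nu\eta|(\sqrt{1+V^2}-V)^2(G_\nu+k_\nu\eta)\ra$ and $\la\eta||k|\eta\ra$.
\end{itemize}

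\textbf{Step 2: the minimiser is admissible.} This is the main point to check: the minimiser $\eta_V$ must satisfy the constraint $\eta_V=\sfJ\eta_V$. We use that $\sfJ$ commutes with functions of $V$, because $V=\sfJ V\sfJ$ and functional calculus with real functions preserves this relation (for example via polynomial approximation of $\sqrt{1+x^2}-x$ on $\sigma(V)$, using antilinearity of $\sfJ$ and real coefficients).

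From this, $\sfJ k_\nu\sfJ=-k_\nu$, and $\sfJ|k|\sfJ=|k|$ we get $\sfJ\hA_V\sfJ=\hA_V$ and hence $\sfJ\hA_V^{-1}\sfJ=\hA_V^{-1}$. Using $\sfJ G_\nu=-G_\nu$, which follows from \eqref{eq-I.02}, \eqref{eq-I.08} and the definition of $\sfJ$, we get $\sfJ\hxi_V=\tfrac12\sum_\nu(-k_\nu)(\sqrt{1+V^2}-V)^2(-G_\nu)=\hxi_V$. Therefore $\sfJ\eta_V=\eta_V$.

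\textbf{Step 3: conclude.} For every admissible $(V,\eta)$ we have $\cE_{g,\vO}(V,\eta)\geq\hcE(V)$, with equality at $\eta=\eta_V$. Taking the infimum over $\eta$ and then over $V$ gives \eqref{eq-III.27}.
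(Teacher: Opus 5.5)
Your proposal is correct and follows the paper's own argument: Lemma~\ref{lem-III.3} removes the square term, you complete the square in $\eta$ as in \eqref{eq-III.23}, and you observe that the minimiser $-\hA_V^{-1}\hxi_V$ is itself $\sfJ$-invariant and therefore admissible. Beyond the paper, you make explicit two points it leaves implicit: that $\hA_V$ is bounded with $\hA_V \geq \sigma > 0$, and that $\sfJ$ commutes with real functions of $V$, which together with $\sfJ k_\nu \sfJ = -k_\nu$ and $\sfJ G_\nu = -G_\nu$ gives $\sfJ \hA_V^{-1}\hxi_V = \hA_V^{-1}\hxi_V$.
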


\section{New Parameterization} \label{sec-IV}
We introduce the parameterization
\begin{align}\label{eq-IV.01}   
V \ = \ \sinh\Big[\tfrac{1}{2}\log(1+z)\Big]
\ = \ 
\frac{1}{2}\Big[(1+z)^\frac{1}{2}-(1+z)^{-\frac{1}{2}}\Big]
\ = \ 
\frac{z}{2\sqrt{1+z}} 
\end{align}
of $V$ in terms of a new operator $z$. In Lemma~\ref{lem-IV.1}~(i)
below we show that $V$ is a positive Hilbert--Schmidt operator obeying
$V = JVJ$ if, and only if, $z$ possesses these properties. The main
advantage of the variable $z$ over $V$ is that those functions of $V$
occuring in \eqref{eq-III.26}, i.e.,
\begin{align}\label{eq-IV.02}
V^2 \ = \ & \frac{z^2}{4(1+z)} 
\ = \ 
\frac{1}{4} \big[ (1+z) + (1+z)^{-1} -2 \big] \, ,
\\[1ex] \label{eq-IV.03}   
V\sqrt{1+V^2} \ = \ & \frac{z^2+2z}{4(1+z)}  
\ = \ 
\frac{1}{4} \big[ (1+z) - (1+z)^{-1} \big] \, ,
\\[1ex] \label{eq-IV.04}   
\big( \sqrt{1+V^2} - V \big)^2 \ = \ & \frac{1}{1+z} \, .
\end{align}
are sums of $1+z$ and $(1+z)^{-1}$ whose derivatives are 
easy to calculate explicitly. Indeed, the functional \eqref{eq-III.26}, 
expressed in terms of $z$, assumes the form
\begin{align}\label{eq-IV.05}
\cE(z) \ := \ &
\frac{1}{2} \sum\nolimits_{\nu=1}^3 
\Big\{\big\la G_\nu \big| \, (1+z)^{-1} G_\nu\big\ra 
- \frac{1}{4}\Tr\big[k_\nu z k_\nu z(1+z)^{-1}\big]
\Big\} 
\nonumber \\
& \quad + \frac{1}{4} \Tr\big[\:(|k|+\tfrac{1}{2}|k|^2)z^2(1+z)^{-1}\big] - \la {\xi}_z | {A}_z^{-1}{\xi}_z\ra,
\end{align}
where 
\begin{align}\label{eq-IV.06}
{A}_z \ = \ |k|+\frac{1}{2} \sum_{\nu=1}^3k_\nu (1+z)^{-1} k_\nu \, ,
\quad {\xi}_z \ = \ \frac{1}{2}\sum_{\nu=1}^3k_\nu(1+z)^{-1}G_\nu \, .
\end{align}
The following two lemmata show that the minimization of
$\hcE$ over $V$ yields the same value as the minimization of $\cE$
over $z$, thus justifying the passage from the variable $V$ to the
new variable $z$ when analyzing the upper bound
$\UBOUND(\vec{0})$.

For their formulation we define the 
real subspace $\cB_{s.a.}(\fh) \subseteq \cB(\fh)$ of self-adjoint 
bounded operators and 
its convex subset $\cB_{>-1}(\fh) \subseteq \cB_{s.a.}(\fh)$
of bounded self-adjoint operators with smallest spectral value strictly
bigger than $-1$.
\begin{align}\label{eq-IV.06,1}
\cB_{s.a.}(\fh) \ = \ &
\big\{ z \in \cB(\fh) \: \big| \ z = z^* \big\} \, ,
\\[1ex] \label{eq-IV.06,2}
\cB_{>-1}(\fh) \ = \ &
\big\{ z \in \cB_{s.a.}(\fh) \: \big| \ \exists \, \mu > -1: \ \ 
z \geq \mu \, \bfone \big\} \, .
\end{align}
%
\begin{lem} \label{lem-IV.1}
Define the map $V_{(\cdot)} : \cB_{>-1}(\fh) \to \cB_{s.a.}(\fh)$ by
\begin{align}\label{eq-IV.06,3}
V_z \ := \ \sinh\Big[\tfrac{1}{2} \log(1+z) \Big] \, .
\end{align}
Then the following statements hold true. 
\begin{enumerate}[label=(\roman*)]
\item If $z \in \cL_{>-1}^2(\fh)$ then $V_z \in \cL_{s.a.}^2(\fh)$, 
and the restriction 
$(z \mapsto V_z): \cL_{>-1}^2(\fh) \to \cL_{s.a.}^2(\fh)$ 
is a bijection.

\item If $z \in \cL_{\geq 0}^2(\fh)$ then $V_z \in \cL_{\geq 0}^2(\fh)$, 
and the restriction 
$(z \mapsto V_z): \cL_{\geq 0}^2(\fh) \to \cL_{\geq 0}^2(\fh)$ 
is a bijection.
\end{enumerate}
\begin{proof} 
Assertion~\textit{(ii)} is Lemma~IV.10~(i) of \cite{BachHach2022}
  with $y:=z+1$, and also \textit{(i)} is proven similarly.
Specifically, to establish Assertion~\textit{(i)} we observe that
\begin{align}\label{eq-IV.06,4}
f: \ (-1,\infty) \ \to \ (-\infty,\infty) \, ,
\quad \lambda \mapsto \sinh\Big[ \tfrac{1}{2} \log(1+\lambda) \Big]
\end{align}
%
%
is a strict monotonically increasing homeomorphism. By the
Hilbert--Schmidt theorem, the lowest eigenvalue
$\lambda_\rmmin$ of $z \in \cL_{>-1}^2(\fh)$ obeys
$\lambda_\rmmin >-1$, and the identity 
$4 V^2 = z^2 (1+z)^{-1}$ implies that
\begin{align}\label{eq-IV.07}
4 \, \Tr [V^2] 
\ \leq \ 
(1+\lambda_\rmmin)^{-1} \Tr[z^2]
\ < \ \infty \, .
\end{align}
Conversely, if $V \in \cL_{s.a.}^2(\fh)$, then 
\begin{align}\label{eq-IV.08}
V \ = \ \sum_{n=1}^\infty V_n \, |\psi_n \ra \la \psi_n| \, ,
\end{align}
where $(\psi_n)_{n=1}^\infty \subseteq \fh$ is an orthonormal basis
of eigenvectors of $V$ and its eigenvalues $V_n$ converge $V_n \to 0$, as $n\to \infty$. By the spectral theorem 
$V=f(z)$, with
\begin{align}\label{eq-IV.09}
z \ = \ \sum_{n} f^{-1}(V_n) |\psi_n \rangle \langle \psi_n |,
\end{align}
so that the lowest eigenvalue $\lambda_\rmmin$ of $z$
obeys $\lambda_\rmmin = f^{-1}(V_\rmmin) > -1$, where 
$V_\rmmin$ is the lowest eigenvalue of $V$. Therefore
\begin{align}\label{eq-IV.10}
0 \ < \ (1+z)^{-1/2} \ \leq \ (1+\lambda_\rmmin)^{-1/2} \, ,
\end{align}
and thus
\begin{align}\label{eq-IV.11}
1 \ \leq \ (1+z)^{1/2} 
\ = \ 
(1+z)^{-1/2} +2V
\ \leq \ 
(1+\lambda_\rmmin)^{-1/2} +2 \|V\|_\op \, ,
\end{align}
which implies that
\begin{align}\label{eq-IV.12}
\Tr[z^2] \ \leq \ 
4 \big\{ (1+\lambda_\rmmin)^{-1/2} + 2 \|V\|_\op \big\}^2 \, \Tr[V^2] 
\ < \ \infty \, .
\end{align}
That the map $z \mapsto V_z$ is injective and surjective follows
immediately from the spectral theorem and the respective properties of
the homeomorphim $f$.
\end{proof}
\end{lem}

\begin{lem} \label{lem-IV.2}
The following variational problems are equivalent
\begin{align} \label{eq-IV.13}
E_\mathrm{up}(\vec{0}) \ = \ X \ = \ Y \ = \ Z \, ,
\end{align}
where
\begin{align} \label{eq-IV.14} 
X \ := \ &
\inf\big\{ \cE(z) \: \big| 
\: z = \sfJ z\sfJ \in \cL_{\geq 0}^2(\fh)\big\} \,, 
\\[1ex] \label{eq-IV.15}
Y \ := \ &
\inf\big\{ \hcE(V) \: \big| 
\: V = \sfJ V \sfJ \in \cL_{s.a.}^2(\fh) \big\} \, ,
\\[1ex] \label{eq-IV.16}
Z \ = \ &
\inf\big\{ \cE(z) \: \big| 
\: z = \sfJ z \sfJ \in \cL_{>-1}^2(\fh) \big\} \, .
\end{align}
\begin{proof}
The properties $\sfJ V\sfJ=V$ and $\sfJ z\sfJ=z$ imply one another by virtue of the spectral theorem, hence
Lemma~\ref{lem-IV.1}~\textit{(i)} and \textit{(ii)} immediately 
yield
\begin{align} \label{eq-IV.17}
E_\mathrm{up}(\vec{0}) \ = \ X 
\quad \text{and} \quad Y \ = \ Z \, .
\end{align}
Clearly, $Y \leq E_\mathrm{up}(\vec{0})$ is true, and it remains to
show that $Y \geq E_\mathrm{up}(\vec{0})$. This is not immediate
from \eqref{eq-III.26}, but going back to \eqref{eq-I.23}, i.e.,
before minimizing w.r.t.\ $\eta$, the only terms that depend on the sign of $V$ are
\begin{align} \label{eq-IV.18}
- \Tr\big[ (k_\nu V\sqrt{1+V^2} )^2 \big] 
\ \ \text{and} \ \ 
\big\la G_\nu +k_\nu \eta \big| \big(\sqrt{1+V^2}-V\big)^2 
(G_\nu +k_\nu \eta) \big\ra \, . 
\end{align}
To analyze the second term in \eqref{eq-IV.18},
we observe that the function $h(x):=(\sqrt{1+x^2}-x)^2$
is monotonically decreasing on $\bbR$, since its derivative
\begin{align}\label{eq-IV.19}
\frac{1}{2} h'(x) \ = \ -\frac{(\sqrt{1+x^2}-x)^2}{\sqrt{1+x^2}} \ < \ 0 
\end{align}
is manifestly negative. Therefore, 
\begin{align}\label{eq-IV.20}
\big\la G_\nu +k_\nu \eta \big| \: h(V) (G_\nu +k_\nu \eta) \big\ra 
\ \geq \ 
\big\la G_\nu +k_\nu \eta \big| \: h(|V|) (G_\nu +k_\nu \eta) \big\ra \, .
\end{align}
On the other hand, setting $U:=\sqrt{1+V^2}$, we obtain
\begin{align}\label{eq-IV.21}
- \Tr\big[(k_\nu V\sqrt{1+V^2} )^2\big]= \Tr\big[(k_\nu V^\star JU )^2\big] \geq -\Tr \big[(k_\nu |V|\sqrt{1+V^2} )^2\big],
\end{align}
from Lemma~\ref{lem-III.1}. Hence, passing from $V \in
\cL_{{s.a.}}^2(\fh)$ to $|V| \in \cL_{{s.a.}}^2(\fh)$ always decreases
the functional, and we can now optimize in $\eta$ to get the
desired result.
\end{proof}
\end{lem}
%
The extension from $z\geq 0$ to $z>-1$ is useful because then $z=0$
becomes an inner point of the set of admissible operators, and the
derivative is the usual Fréchet derivative. Note that replacing $Z$ in
Lemma~\ref{lem-IV.2} by
\begin{align}\label{eq-IV.22}
Z_\kappa \ = \ &
\inf\big\{ \cE(z) \: \big| 
\: z = \sfJ z \sfJ \in \cL_{>\kappa}^2(\fh) \big\} \, ,
\end{align}
for any $-1<\kappa<0$, still yields $\UBOUND=Z_\kappa$, with almost
the same proof. Throughout Sect.~\ref{sec-V}, we assume that
$z\in\cL_{> -1/2}^2(\fh)$, so that $\|(1+z)^{-1}\|_\op$ is universally
bounded by 2.

\section{Stationarity Condition for the Upper Bound} \label{sec-V}
%
This section is devoted to the proof of Theorem~\ref{thm-I.2}, i.e., 
the derivation of the Fréchet derivative of $z \mapsto \cE(z)$
on $\cL_{> -1/2}^2(\fh)$, which is an open convex subset of the
real vector space of self-adjoint Hilbert--Schmidt operators 
on $\fh$. Recall that the derivative of $\cE(z)$ is the Hilbert--Schmidt 
operator $\partial_z \cE(z)\in \cL_{s.a.}^2(\fh)$ such that
\begin{align} \label{eq-V.01}   
\cE(z+h) - \cE(z)
\ = \ 
\Tr\big\{ \partial_z\cE(z) \, h \big\} 
\, + \, o\big(\|h\|_{\cL^2(\fh)}\big) \, ,
\end{align} 
provided $z+h\in\cL_{> -1/2}^2(\fh)$.

\paragraph{Proof of Theorem~\ref{thm-I.2}.}
We denote $M := \|z\|_{\cL^2(\fh)} + 1$ and frequently make use of the
estimates $\| k_\nu \|_\op \leq \Lambda$ and
\begin{align} \label{eq-V.02}   
\bigg\| \frac{1}{1+z+h} \, - \, \frac{1}{1+z} \, + \, 
\frac{1}{1+z} \, h \, \frac{1}{1+z} \bigg\|_{\cL^2(\fh)} 
\ \leq \ 8 \| h \|_{\cL^2(\fh)}^2 \,  ,
\end{align} 
which follows from the second resolvent identity and 
$\big\| (1+z)^{-1} \big\|_\op \leq 2$, 
$\big\| (1+z+h)^{-1} \big\|_\op \leq 2$.

We compute the derivative of $\cE(z)$. First, we obtain from
\eqref{eq-V.02} that
\begin{align} \label{eq-V.04}   
\Tr\Big[k_\nu & (z+h) k_\nu \frac{z+h}{1+z+h} \Big]
\, - \, \Tr\Big[k_\nu z k_\nu \frac{z}{1+z}\Big] 
\\ \nonumber 
\ = \ &
\Tr\Big[k_\nu h k_\nu \frac{z}{1+z}\Big] 
\, + \,
\Tr\Big[k_\nu (z+h) k_\nu \Big( \frac{1}{1+z} - \frac{1}{1+z+h} \Big) \Big]
\\[1ex] \nonumber 
\ = \ &
\Tr\Big[k_\nu h k_\nu \frac{z}{1+z}\Big] 
\, + \,
\Tr\Big[k_\nu z k_\nu \frac{1}{1+z} h \frac{1}{1+z} \Big] 
\, + \, \cO\big( \| h \|_{\cL^2(\fh)}^2 \big) 
\\[1ex] \nonumber 
\ = \ &
\Tr\Big[ \Big( k_\nu \frac{z}{1+z} k_\nu 
+ \frac{1}{1+z} k_\nu z k_\nu \frac{1}{1+z} \Big) h \Big]
\, + \, \cO\big( \| h \|_{\cL^2(\fh)}^2 \big) \, .
\end{align} 
Recalling that $A_0 = |k| + \tfrac{1}{2}|k|^2$,
we further obtain
\begin{align} \label{eq-V.05}   
\Tr\big[ A_0 & (z+h)^2 (1+z+h)^{-1}] 
\, - \, \Tr\big[ A_0 z^2 (1+z)^{-1}] 
\\[1ex] \nonumber 
\ = \ &
\Tr\big[ A_0 \, (zh + hz) (1+z)^{-1}] 
\, - \, \Tr\Big[A_0 \,z^2 \frac{1}{1+z}h\frac{1}{1+z}  \Big]
\, + \, \cO\big( \| h \|_{\cL^2(\fh)}^2 \big) 
\\[1ex] \nonumber 
\ = \ &
\Tr\Big[ \Big( \frac{1}{1+z} A_0 \frac{z}{1+z} 
+ \frac{z}{1+z} A_0 \Big) h \Big] 
\, + \, \cO\big( \| h \|_{\cL^2(\fh)}^2 \big) \, .
\end{align} 
From $A_z = A_0 - \frac{1}{2} \sum_{\nu=1}^3 k_\nu \frac{z}{1+z} k_\nu$
and Eqs.~\eqref{eq-V.04} and \eqref{eq-V.05} follows that
\begin{align} \label{eq-V.05,1}   
\Tr\big[ A_0 & (z+h)^2 (1+z+h)^{-1}] 
\, - \, \Tr\big[ A_0 \, z^2 (1+z)^{-1}] 
\nonumber \\ 
& \, - \, \frac{1}{2}
\sum_{\nu=1}^3 \Tr\Big[ k_\nu (z+h) k_\nu \frac{z+h}{1+z+h}
\, - \, k_\nu z k_\nu \frac{z}{1+z} \Big]
\nonumber \\[1ex] 
\ = \ &
\Tr\Big[ \frac{1}{1+z} \Big( (1+z) A_z (1+z) - A_0 -\frac{1}{2}\sum_{\nu=1}^3 k_\nu z k_\nu \Big) \frac{1}{1+z} h \Big] 
\nonumber \\ 
& \quad
\, + \, \cO\big( \| h \|_{\cL^2(\fh)}^2 \big) \, .
\end{align} 
Next, we note that
\begin{align} \label{eq-V.06}   
\big\la \xi_{z+h} & \big| \, A_{z+h}^{-1} \xi_{z+h} \big\ra 
\, - \, \big\la \xi_{z} \big| \, A_{z}^{-1} \xi_{z} \big\ra 
\nonumber \\[1ex] 
\ = \ &
\big\la (\xi_{z+h} - \xi_{z}) \big| \, A_{z+h}^{-1} \xi_{z+h} \big\ra 
\, + \, \big\la \xi_{z} \big| \, (A_{z+h}^{-1} - A_{z}^{-1}) \xi_{z+h} \big\ra 
\nonumber \\ & \quad 
\, + \, \big\la \xi_{z} \big| \, A_{z}^{-1} (\xi_{z+h} - \xi_{z}) \big\ra \, .
\end{align} 
Using
\begin{align} \label{eq-V.07} 
A_{z+h}^{-1} - A_z^{-1} 
\ = \ 
\frac{1}{2} \sum_{\nu=1}^3 
A_z^{-1} k_\nu \frac{1}{1+z} h \frac{1}{1+z} k_\nu A_z^{-1}
\, + \, \cO\big( \| h \|_{\cL^2(\fh)}^2 \big) 
\end{align} 
and 
\begin{align} \label{eq-V.08} 
\xi_{z+h} - \xi_z
\ = \ 
- \frac{1}{2} \sum_{\nu=1}^3 
k_\nu \frac{1}{1+z} h \frac{1}{1+z} G_\nu 
\, + \, \cO\big( \| h \|_{\cL^2(\fh)}^2 \big) \, , 
\end{align} 
we obtain
\begin{align} \label{eq-V.09}   
\big\la (\xi_{z+h} & - \xi_{z}) \big| \, A_{z+h}^{-1} \xi_{z+h} \big\ra 
\, + \, \big\la \xi_{z} \big| \, A_{z}^{-1} (\xi_{z+h} - \xi_{z}) \big\ra 
\nonumber \\[1ex] 
\ = \ &
- \sum_{\nu=1}^3 \rRe \big\la k_\nu \frac{1}{1+z} h \frac{1}{1+z} G_\nu 
\big| \, A_{z}^{-1} \xi_{z} \big\ra 
\, + \, \cO\big( \| h \|_{\cL^2(\fh)}^2 \big)
\\[1ex] \nonumber 
\ = \ &
\Tr\Big[ \frac{1}{1+z} \rRe\Big( - \sum_{\nu=1}^3 
\big| G_\nu \big\ra \big\la k_\nu A_{z}^{-1} \xi_{z} \big| \Big)
\frac{1}{1+z} h \Big]  
\, + \, \cO\big( \| h \|_{\cL^2(\fh)}^2 \big) \, ,
\end{align} 
and
\begin{align} \label{eq-V.10}   
\big\la \xi_{z} & \big| \, (A_{z+h}^{-1} - A_{z}^{-1}) \xi_{z+h} \big\ra 
\\[1ex] \nonumber 
\ = \ &
 \frac{1}{2} \sum_{\nu=1}^3 
\big\la \xi_z \big| \, A_z^{-1} k_\nu \frac{1}{1+z} h 
\frac{1}{1+z} k_\nu A_z^{-1} \xi_{z} \big\ra 
\, + \, \cO\big( \| h \|_{\cL^2(\fh)}^2 \big) 
\\[1ex] \nonumber 
\ = \ &
\Tr\Big[ \frac{1}{1+z} \Big( \frac{1}{2} \sum_{\nu=1}^3 
\big| k_\nu A_{z}^{-1} \xi_{z} \big\ra \big\la k_\nu A_{z}^{-1} \xi_{z} \big| 
\Big) \frac{1}{1+z} h \Big]  
\, + \, \cO\big( \| h \|_{\cL^2(\fh)}^2 \big) \, .
\end{align} 
Furthermore, 
\begin{align} \label{eq-V.03}   
\big\la G_\nu \, \big| \:(1+z & +h)^{-1} G_\nu\big\ra
\, - \, \big\la G_\nu \, \big| \:(1+z)^{-1} G_\nu\big\ra
\\[1ex] \nonumber
\ = \ &
- \big\la (1+z)^{-1} G_\nu \, \big| \; h \, (1+z)^{-1} G_\nu\big\ra
\, + \, \cO\big( \| h \|_{\cL^2(\fh)}^2 \big) 
\\ \nonumber
\ = \ &
\Tr\Big[ \Big( - \frac{1}{1+z} | G_\nu \ra 
\la G_\nu |  \frac{1}{1+z} \Big) \, h \Big] 
\, + \, \cO\big( \| h \|_{\cL^2(\fh)}^2 \big) \, ,
\end{align} 
and adding up this and \eqref{eq-V.09} and \eqref{eq-V.10},
we obtain
\begin{align} \label{eq-V.11}   
& \frac{1}{2} \sum_{\nu=1}^3 \Big\{ 
\big\la G_\nu \, \big| \:(1+z+h)^{-1} G_\nu\big\ra
\, - \, \big\la G_\nu \, \big| \:(1+z)^{-1} G_\nu\big\ra \Big\}
\\[1ex] \nonumber & \qquad \qquad
\, - \, \big\la \xi_{z+h} \big| \, A_{z+h}^{-1} \xi_{z+h} \big\ra 
\, + \, \big\la \xi_{z} \big| \, A_{z}^{-1} \xi_{z} \big\ra 
\\[1ex] \nonumber
& \ = \ 
-\Tr\Big[ \frac{1}{1+z} \Big( 
\frac{1}{2} \sum_{\nu=1}^3 | G_\nu - k_\nu A_{z}^{-1} \xi_{z} \ra 
\la G_\nu - k_\nu A_{z}^{-1} \xi_{z} |  \Big) \frac{1}{1+z} h \Big] 
+ \cO\big( \| h \|_{\cL^2(\fh)}^2 \big) \, .
\end{align} 
Eqs.~\eqref{eq-V.11} and \eqref{eq-V.05,1} together yield
\begin{align} \label{eq-V.12}   
(1+z) & \partial_z\cE_g(z) (1+z) 
\ = \ 
-\frac{1}{2} \sum_{\nu=1}^3 \Big\{ | G_\nu - k_\nu A_{z}^{-1} \xi_{z} \ra 
\la G_\nu - k_\nu A_{z}^{-1} \xi_{z} | 
\\[1ex] \nonumber &
\, + \, 
\frac{1}{4} (1+z) A_z (1+z) - A_0 -\frac{1}{2}\sum_{\nu=1}^3 k_\nu z k_\nu 
\end{align} 
and, by identifying $\eta_z\equiv-A_z^{-1}\xi_z$ from \eqref{eq-I.24,2}, Theorem~\ref{thm-I.2}. 
\hspace{\fill} $\square$

\subsection*{Acknowledgement}
Support from DFG Grant Nr.~BA~1477/15-1, Project Nr.~505496137, of the
German Science Foundation is gratefully acknowledged. Furthermore, the
authors thank M.~Ballesteros, S.~Breteaux, and M.~Mlinarzik for useful
discussions.


\end{document}